\documentclass[12pt]{article}
\newif\ifmobile
\mobilefalse 

\usepackage{geometry}
\usepackage{fancyhdr}
\usepackage{booktabs}
\usepackage{enumitem}
\usepackage{amsmath}
\usepackage{amssymb}
\usepackage{amsthm}
\usepackage{amsfonts}
\usepackage{slashed}
\usepackage{mathtools}      
\usepackage{algorithm}      
\usepackage{algpseudocode}  
\usepackage{authblk}        
\usepackage[unicode,psdextra,pdfencoding=auto]{hyperref}
\usepackage{doi}
\newcommand{\R}{\mathbb{R}}
\newcommand{\C}{\mathbb{C}}
\newcommand{\Hbb}{\mathbb{H}} 

\newcommand{\SU}[1]{\mathrm{SU}(#1)}
\newcommand{\U}[1]{\mathrm{U}(#1)}
\newcommand{\Spin}[1]{\mathrm{Spin}(#1)}

\let\Tr\relax 
\DeclareMathOperator{\Tr}{Tr}

\newcommand{\kappavec}{\vec{\kappa}}

\usepackage{amsthm} 

\theoremstyle{plain}
\newtheorem{theorem}{Theorem}[section]
\newtheorem{lemma}[theorem]{Lemma}
\newtheorem{proposition}[theorem]{Proposition}
\newtheorem{corollary}[theorem]{Corollary}

\theoremstyle{definition}
\newtheorem{definition}[theorem]{Definition}

\theoremstyle{remark}
\newtheorem{remark}[theorem]{Remark}

\ifmobile
\else
\fi
\title{Derivation of the Born Rule and Operational Quantum Formalism in the Accessibility Framework through Boundary Reduction}
\author[ ]{Everett Fall}
\author[1]{Hironori Kondo}
\affil[1]{Department of Material Science and Engineering, Massachusetts Institute of Technology, Cambridge, MA, 02139, USA}
\date{\today}
\begin{document}
\maketitle
\begin{abstract}
We show that the operational quantum formalism---the Born rule, L\"uders state updating, quantum interference, non-Markovian effective dynamics, and Bell inequality violation at the Tsirelson bound \(2\sqrt{2}\)---arises within Accessibility Theory (AT) from the Aperture construction together with explicit coherence and locality assumptions stated in the paper.  AT is a framework built on real graded spectral triples and a single algebraic selection principle.  The Principle of Universal Accessibility Balance requires three independent measures of the complexity of a spectral triple---its algebraic, gauge-theoretic, and geometric content---to be exactly equal and minimized, uniquely selecting the algebra \(\C \oplus \Hbb \oplus M_3(\C)\) and with it the Standard Model gauge group, particle content, four-dimensional Lorentzian spacetime, three generations, and gravitational dynamics.  Restriction to a codimension-one geometric boundary reduces this algebra to its commutative center \(\C \oplus \C \oplus \C\)---the Aperture---which defines a permanent information bottleneck for any embedded observer.  Coherence conditions on inference through this bottleneck, together with Gleason's theorem on the 48-dimensional internal Hilbert space, uniquely determine the Born rule; the remaining operational features follow from the same observer-level framework under the stated assumptions.  At the ontological level the theory is deterministic and state-realist, while the operational quantum formalism appears at the observer level as a consequence of structurally limited access to the underlying algebraic reality.
\end{abstract}
\section[Introduction]{Introduction}
\label{sec:introduction}

The two foundational frameworks of modern physics---the Standard Model and General Relativity---provide an extraordinarily accurate description of the physical world, yet they remain structurally disconnected. The Standard Model describes matter and its interactions through gauge symmetries but treats the gauge group \(\U{1} \times \SU{2} \times \SU{3}\), the number of fermion generations, and the spacetime dimension as empirical inputs. General Relativity describes the dynamics of spacetime geometry but says nothing about the origin of matter or the internal symmetries that govern it. Meanwhile, the operational quantum formalism---the observer-level probability rule, state-update prescription, and interference structure used throughout microscopic physics---is postulated rather than derived from a deeper geometric or algebraic principle.

This paper shows that the operational quantum formalism emerges from \textbf{Accessibility Theory} (AT), a framework built within the mathematics of real graded spectral triples \((\mathfrak{A}, \mathcal{H}, D)\) in the sense of Connes \cite{Connes1994, ConnesMarcolli2008}.  AT introduces a single selection principle---the \textbf{Principle of Universal Accessibility Balance}---which requires that three independent measures of the complexity of a spectral triple be exactly equal and minimized.  This principle determines the algebra, gauge group, particle content, spacetime dimension, and gravitational dynamics without free structural parameters \cite{AT}.  The present paper extends the framework to the observer level: we construct the algebraic interface through which an embedded observer accesses the underlying structure, and derive the quantum formalism as the unique inferential response to the resulting information bottleneck.

From this principle and the observer-level constraints it induces, the following deductive chain unfolds without free structural parameters:
\begin{enumerate}
    \item[\textbf{(i)}] The balance principle, applied to real graded spectral triples, uniquely selects the foundational algebra \(\mathfrak{A} = \C \oplus \Hbb \oplus M_3(\C)\)---precisely the algebra underlying the NCG formulation of the Standard Model \cite{ChamseddineConnesMarcolli2007}.
    \item[\textbf{(ii)}] From this algebra emerge the gauge group \(\U{1} \times \SU{2} \times \U{3}\), the Higgs field, and a 16-state fermionic generation with the correct Standard Model hypercharges.
    \item[\textbf{(iii)}] Quantum gauge anomaly cancellation excludes all spacetime dimensions \(n \geq 6\) and selects \(n = 4\) with Lorentzian signature.  Representational balance then yields exactly three fermion generations.
    \item[\textbf{(iv)}] The spectral action produces the Einstein--Hilbert action of General Relativity, and the quartically divergent vacuum energy contribution cancels at the balance equilibrium state.
    \item[\textbf{(v)}] Restriction to a codimension-one geometric boundary determines a boundary algebra \(\mathfrak{A}_b = \C \oplus \C \oplus \C\) and a structural information bottleneck---the \textbf{Aperture}---through which any embedded observer must interact with the underlying reality.
    \item[\textbf{(vi)}] Coherent valuation of Aperture outcomes forces probability on the Aperture record algebra; under a normalized finitely additive noncontextual extension to the full projection lattice, Gleason's theorem on the 48-dimensional Hilbert space yields the Born rule.  L\"uders state updating, quantum interference, and non-Markovian effective dynamics follow as consequences; Bell-inequality violation at the Tsirelson bound \(2\sqrt{2}\) follows in bipartite mode sectors under the standard local-QFT microcausality structure on the emergent Lorentzian manifold.
\end{enumerate}
The paper follows this chain.  Sections~\ref{sec:principles} and~\ref{sec:aperture} present the relevant structural results of \cite{AT} in condensed form: the balance principle and its consequences (i)--(iv), and the boundary reduction leading to the Aperture (v).  Section~\ref{sec:quantum} contains the main contribution of this paper: the emergence of the quantum formalism (vi).  Section~\ref{sec:discussion} summarizes the results and discusses open questions.  Detailed proofs of the structural results are given in \cite{AT}; derivations of the quantum formalism are contained in the appendices of this paper.

The central claim is that the quantum formalism is not an independent postulate but a derived consequence of the balance principle and the observer-level constraints induced by the Aperture construction.  A direct consequence of this framing is that the standard tension between unitary Schr\"odinger evolution and non-unitary collapse on measurement is resolved structurally: the ontological state evolves unitarily at all times, while what appears as collapse is an epistemic L\"uders update on the observer's density operator.

\paragraph{Relation to prior work.}
The Noncommutative Geometry Standard Model program \cite{ConnesChamseddine1996, ChamseddineConnesMarcolli2007} demonstrated that the Standard Model Lagrangian coupled to gravity can be recovered from the spectral action of an appropriately chosen spectral triple. That program, however, takes the algebra, spacetime dimension, and generation number as inputs. AT derives all three as outputs. The quantum reconstruction program \cite{Hardy2001, Chiribella2011, Dakic2011} derives the Hilbert space formalism from operational axioms, but assumes a quantum-mechanical framework at the outset. AT derives the quantum formalism from an algebraic and geometric starting point that is not \emph{a priori} quantum-mechanical. Recent work on stochastic reformulations of quantum theory \cite{Barandes2025} has identified intrinsic non-Markovian character as a hallmark of quantum dynamics. The AT framework provides a structural origin for this non-Markovianity: the coarse-graining imposed by the Aperture on deterministic bulk evolution produces effective stochastic dynamics that admit non-Markovian behavior under explicit algebraic conditions, offering a concrete algebraic mechanism for the phenomenon identified in \cite{Barandes2025}.
\section[The Principle of Universal Accessibility Balance]{The Principle of Universal Accessibility Balance}
\label{sec:principles}

This section summarizes the foundational principle of Accessibility Theory and its structural consequences, established in \cite{AT}.  We present the material needed to set up the Aperture construction and the quantum formalism derivation that follow in Sections~\ref{sec:aperture} and~\ref{sec:quantum}.  We begin by defining the three measures of accessibility that characterize a spectral triple, then state the balance and minimization conditions that constitute the principle.  Finally, we show how these conditions uniquely determine the algebra, gauge group, particle content, spacetime dimension, generation number, and gravitational dynamics.

\subsection[The Accessibility Measures]{The Accessibility Measures}
\label{subsec:measures}

A spectral triple \((\mathfrak{A}, \mathcal{H}, D)\) consists of a finite-dimensional real \(*\)-algebra \(\mathfrak{A}\), a complex Hilbert space \(\mathcal{H}\) carrying a faithful \(*\)-representation of \(\mathfrak{A}\), and a self-adjoint Dirac operator \(D\) \cite{Connes1994}.  To formulate the balance principle, we require a dimensionless measure that captures how much of a structure's internal complexity is accessible to interaction.  The goal is to define a common currency of information-carrying capacity for three sectors---the algebraic source, the gauge symmetries, and the spacetime geometry---so that we can require all three to be exactly capacity-matched: no sector may carry more or less accessible structure than the foundational algebra provides.  Simple dimension counting is insufficient for this purpose: for instance, the abelian group \(\U{1}^{\oplus 8}\) and the non-abelian group \(\SU{3}\) both have eight-dimensional Lie algebras, yet they describe very different physics because non-abelian generators participate in self-interactions.  The accessibility measures are designed to capture precisely such distinctions.

\paragraph{Foundational (Discrete) Accessibility.}
The algebraic core of the spectral triple has an intrinsic complexity: how large it is internally, and how richly it can act on a representation space.  The Foundational Accessibility captures both aspects.  The algebra is characterized by two quantities: its \emph{structural complexity} \(K = \dim_{\R}(\mathfrak{A})\), and its \emph{representational complexity} \(R = \sum_i \dim_{\C}(V_{\min,i})\), where \(V_{\min,i}\) is the minimal faithful complex \(*\)-representation of the \(i\)-th simple factor.  The Foundational Accessibility is defined as their product:
\begin{equation}
\label{eq:A_disc}
\mathcal{A}_{\mathrm{disc}}(\mathfrak{A}) \;\coloneqq\; K \cdot R\,.
\end{equation}
Here \(K\) counts the independent algebraic directions, while \(R\) counts the independent states on which those directions can act.  Their product therefore counts the effective number of algebra-on-representation pairings---the intrinsic measure of operator-space dimension for the algebraic source.  This value serves as the algebraic anchor to which both physical realizations must conform.

\paragraph{Accessibility of a realized sector.}
For a physical realization of the algebra, however, the raw dimensions of the sector are expressed in units specific to that domain (e.g., Lie algebra generators or spinor components).  To obtain a quantity comparable to the foundational anchor, these raw dimensions must first be converted to a common scale.  The procedure \cite{AT} is:
\begin{enumerate}
\item \emph{Normalize:} Determine the effective linear degrees of freedom \(d_{\mathrm{eff}}\) by dividing the raw dimensions of the sector by the dimension of its fundamental local unit---the smallest irreducible piece natural to that domain.
\item \emph{Square:} The accessibility is \(d_{\mathrm{eff}}^2\), reflecting that operator-space dimension scales quadratically with the effective linear dimension (the full matrix algebra on an \(n\)-dimensional complex space has dimension \(n^2\), not \(n\)).
\end{enumerate}

\paragraph{Symmetry Accessibility.}
The algebra must be realized physically through gauge symmetries, and not all symmetry generators carry the same dynamical weight: non-abelian generators participate in self-interactions while abelian ones do not.  The Symmetry Accessibility measures the effective dynamical capacity of the gauge sector, accounting for this distinction.  The algebra generates a unitary gauge group \(\mathcal{U}(\mathfrak{A})\) with Lie algebra \(\mathfrak{g} = \mathfrak{g}_{\mathrm{na}} \oplus \mathfrak{u}(1)^{\oplus A}\), where \(\mathfrak{g}_{\mathrm{na}}\) is the non-abelian (semi-simple) part of total dimension \(G\), and \(A\) counts the abelian factors.  In the canonically normalized basis, the trace-form metric on \(\mathfrak{g}\) assigns weight \(2\) to each non-abelian generator and weight \(1\) to each abelian generator \cite{Georgi1982, AT}.  Normalizing the total weight \(\Tr(g^{-1}) = 2G + A\) by the standard semi-simple unit (weight \(2\)) yields an effective linear dimension \(d_{\mathrm{eff}} = G + \tfrac{1}{2}A\).  The Symmetry Accessibility is
\begin{equation}
\label{eq:A_sym}
\mathcal{A}_{\mathrm{sym}}(\mathfrak{A}) \;\coloneqq\; \bigl(G + \tfrac{1}{2}A\bigr)^2\,.
\end{equation}

\paragraph{Continuous Accessibility.}
The geometric realization of the algebra lives on a spacetime manifold, and the relevant question is: how much of the Hilbert space is geometrically active, measured in units natural to the spacetime?  On an \(n\)-dimensional spacetime manifold, the fundamental local geometric unit is the spinor, whose dimension is \(2^{n/2}\) \cite{LawsonMichelsohn1989}.  Let \(H_{\mathrm{eff}} = \dim_{\C}((\ker D)^{\perp})\) be the dimension of the subspace on which the Dirac operator acts non-trivially.  Normalizing \(H_{\mathrm{eff}}\) by the spinor dimension and squaring gives
\begin{equation}
\label{eq:A_cont}
\mathcal{A}_{\mathrm{cont}}(n, H_{\mathrm{eff}}) \;\coloneqq\; \frac{H_{\mathrm{eff}}^2}{2^n}\,.
\end{equation}

This framework provides a natural unification of the two pillars of physics.  The Symmetry Accessibility governs the internal gauge sector, while the Continuous Accessibility governs the external geometric sector.  The balance principle asserts that both must exactly match the Foundational Accessibility of the algebraic source: the three sectors must be \emph{capacity-matched}.  The motivation is that the physical realizations must be faithful realizations of the algebraic source---complete and lossless in both directions.  If a realization carried \emph{less} accessibility than the algebra provides, some algebraic degrees of freedom would fail to participate in the physics: the realization would be an incomplete, lossy projection of the source.  If a realization carried \emph{more} accessibility than the algebra warrants, the physics would contain spurious degrees of freedom not grounded in the foundational structure.  Equality is the unique condition that excludes both pathologies: the total effective accessible capacity of each physical sector must match that of the algebraic source---no more, no less.

\subsection[The Balance Principle]{The Balance Principle}
\label{subsec:balance}

This requirement of exact accessibility-matching is the core of the theory.  The preceding section defined the three accessibility measures; we now formalize the requirement that they agree, and add the minimization condition that selects the unique algebraic solution.  Together, these two conditions constitute the Principle of Universal Accessibility Balance.

\paragraph{The Balance Condition.}
The physical realizations of the spectral triple must conform exactly to the foundational algebraic accessibility:
\begin{equation}
\label{eq:balance}
\mathcal{A}_{\mathrm{sym}} \;=\; \mathcal{A}_{\mathrm{disc}} \;=\; \mathcal{A}_{\mathrm{cont}}\,.
\end{equation}
Explicitly, for a spectral triple with algebraic parameters \((K, R)\), gauge parameters \((G, A)\), and geometric parameters \((H_{\mathrm{eff}}, n)\):
\begin{equation}
\label{eq:balance_explicit}
\bigl(G + \tfrac{1}{2}A\bigr)^2 \;=\; K \cdot R \;=\; \frac{H_{\mathrm{eff}}^2}{2^n}\,.
\end{equation}
The left equality constrains the algebra and its gauge group (\emph{Equation of Structural Balance}); the right equality constrains the Hilbert space and spacetime dimension (\emph{Equation of Representational Balance}).  Because \(K\), \(R\), \(G\), \(A\), \(H_{\mathrm{eff}}\), and \(n\) are all integers, these are Diophantine equations.

\paragraph{The Minimization Condition.}
Among all spectral triples satisfying the Balance Condition, the physical theory realizes the one with minimal Foundational Accessibility \(K \cdot R\).  This minimization operates at three levels:
\begin{itemize}
\item \emph{Algebraic} (permanent): Among all non-trivial, finite-dimensional, semi-simple, real \(*\)-algebras satisfying the Equation of Structural Balance, select the one with the smallest \(K \cdot R\).
\item \emph{Representational} (permanent): For the selected algebra and spacetime dimension, choose the Hilbert space representation with minimal total dimension \(H\) among all bimodule configurations satisfying the Equation of Representational Balance, faithfulness of the representation, and the existence of a chiral grading operator compatible with the spectral triple axioms.
\item \emph{Vacuum} (dynamical): The continuous parameters of the Dirac operator (masses, couplings) evolve toward the equilibrium configuration determined by this minimal accessibility value.
\end{itemize}

The first two levels fix all \emph{structural parameters}---the algebra, gauge group, spacetime dimension, Hilbert space dimension, and generation count---as quantized Diophantine invariants.  The third governs the approach of \emph{dynamical moduli} (the continuous parameters of \(D\)) toward equilibrium.  The balance principle thus draws a sharp distinction between what is structurally determined and what evolves.

\subsection[The Foundational Algebra]{The Foundational Algebra}
\label{subsec:algebra}

The first application of the principle is to determine the algebra.  The Equation of Structural Balance---the requirement that the Symmetry and Foundational Accessibilities are equal, \(\mathcal{A}_{\mathrm{sym}} = \mathcal{A}_{\mathrm{disc}}\)---is a Diophantine constraint: only finitely many algebras can satisfy it, and among those, the minimization condition selects a unique winner.  Remarkably, the algebra that emerges is precisely the one that Connes, Chamseddine, and Marcolli identified as the algebraic foundation of the Standard Model in the Noncommutative Geometry program \cite{ChamseddineConnesMarcolli2007}---but here it is derived from the balance principle rather than chosen to match observation.  The technical details of the search follow; the key result is Theorem~\ref{thm:algebra}.

Any candidate algebra \(\mathfrak{A}\) is a direct sum of simple real \(*\)-algebra factors of the form \(M_m(\Delta)\) with \(\Delta \in \{\R, \C, \Hbb\}\).  For each such factor, the quantities \(K\), \(R\), \(G\), and \(A\) can be computed from standard representation theory \cite{FultonHarris1991, Knapp2002}.  For a direct sum, these quantities are additive, so the Equation of Structural Balance~\eqref{eq:balance_explicit} becomes a Diophantine constraint on the list of simple factors.

\begin{theorem}[Uniqueness of the Foundational Algebra {\cite[Thm.~3.1]{AT}}]
\label{thm:algebra}
The Equation of Structural Balance~\eqref{eq:balance_explicit}, subject to the Minimization Condition, has a unique non-trivial solution among finite-dimensional semi-simple real \(*\)-algebras:
\begin{equation}
\label{eq:algebra}
\mathfrak{A} \;=\; \C \oplus \Hbb \oplus M_3(\C)\,,
\end{equation}
with \(K = 24\), \(R = 6\), \(G = 11\), \(A = 2\), and \(\mathcal{A}_{\mathrm{disc}} = K \cdot R = 144\).
\end{theorem}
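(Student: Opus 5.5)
The plan is a finite Diophantine search. First I tabulate, for every simple real $*$-algebra factor $M_m(\Delta)$, its four invariants $(K,R,G,A)$. Then I use additivity over direct sums to reduce the Equation of Structural Balance to a constraint on finite multisets of factors. Finally I bound the search and enumerate.

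\emph{Per-factor table.} The unitary group of $M_m(\R)$ is $O(m)$, of $M_m(\C)$ is $U(m)$, and of $M_m(\Hbb)$ is $Sp(m)$. This gives:
\begin{itemize}
\item $M_m(\R)$: $(K,R,G,A) = (m^2,\, m,\, m(m-1)/2,\, 0)$ for $m\ge 3$. For $m=2$ the group $SO(2)$ is abelian, so the $\mathfrak{so}(2)$ generator is counted as $A=1$ and $G=0$. For $m=1$ the entry is $(1,1,0,0)$.
\item $M_m(\C)$: $(2m^2,\, m,\, m^2-1,\, 1)$.
\item $M_m(\Hbb)$: $(4m^2,\, 2m,\, m(2m+1),\, 0)$.
\end{itemize}
As a check, $\C$ gives $(2,1,0,1)$, $\Hbb$ gives $(4,2,3,0)$, and $M_3(\C)$ gives $(18,3,8,1)$. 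Summing these yields $K=24$, $R=6$, $G=11$, $A=2$. Then $G+\tfrac12 A = 12$ and $KR = 144 = 12^2$, so the claimed algebra satisfies~\eqref{eq:balance_explicit}. This is the existence half.

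\emph{Reduction to a bounded search.} Because $KR$ is an integer, $(G+\tfrac12 A)^2$ must be an integer. This forces $A$ to be even and $KR=s^2$ with $s=G+\tfrac12 A$ an integer. For a sum of $k$ factors, $K\ge k$ and $R\ge k$, so $KR\ge k^2$. Each factor has $K\ge m^2$ and $R\ge m$, so only finitely many algebras have $KR\le 144$. Minimality therefore reduces to two steps:
\begin{enumerate}
\item[(a)] show that no nontrivial algebra satisfies $(G+\tfrac12A)^2 = KR$ with $KR=s^2$ and $s\le 11$;
\item[(b)] show that among algebras with $KR=144$ only $\C\oplus\Hbb\oplus M_3(\C)$ has $G+\tfrac12A=12$.
\end{enumerate}
To prune, I use the per-factor inequality $G+\tfrac12A \le K/2 + O(m)$, which holds type by type from the table. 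This shows that $s$ grows roughly like $K/2$ while $KR$ grows like $K\cdot R$. Hence $s^2=KR$ forces $K\approx 4R$, which is a strong arithmetic restriction: it fixes $R$ in terms of $K$ and lets me enumerate by the value of $R$, for $R\le 12$.

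\emph{Main obstacle.} The hardest part is the exhaustive enumeration, especially the "padding" summands. Copies of $\R$ contribute $(1,1,0,0)$, and copies of $\C$ or $M_2(\R)$ contribute small abelian terms. Such summands can adjust $K$, $R$ and $A$ without adding non-abelian weight, so they could in principle produce spurious balanced solutions with $KR<144$, or competitors at $KR=144$. My first move is to organize the search by the multiset of non-abelian "core" factors, namely $\Hbb$, $M_m(\C)$ with $m\ge2$, $M_m(\R)$ with $m\ge3$, and $M_m(\Hbb)$. For each fixed core, I treat the numbers $a$ of $\R$ summands, $b$ of $\C$ summands and $c$ of $M_2(\R)$ summands as unknowns. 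The balance equation then becomes a quadratic Diophantine equation in $(a,b,c)$ of the form
\[
\bigl(G_0+\tfrac12(A_0+b+c)\bigr)^2=(K_0+a+2b+4c)(R_0+a+b+2c),
\]
which I solve or rule out case by case for each core with $K_0R_0\le144$.

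I expect the conclusion to be that every solution with $s\le 11$ is excluded, and that at $s=12$ the only surviving configuration is core $\Hbb\oplus M_3(\C)$ with $b=1$ and $a=c=0$. That is exactly Theorem~\ref{thm:algebra}.
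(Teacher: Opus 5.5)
Your plan follows the paper's argument: cap the search at $K\cdot R\le 144$ using the known solution, bound the number of summands and the size of each factor via $K,R\ge$ (number of summands), and enumerate exhaustively. The paper hands this finite check to a computer search in \cite{AT}, and your factor table, the parity argument forcing $A$ to be even, and the existence check are all correct. One caution: the pruning step ``$s^2=KR$ forces $K\approx 4R$'' is not valid and must not be used to discard cases, because the $O(m)$ correction is of the same order as $R$ (for $M_m(\Hbb)$ one has $G+\tfrac12A=(K+R)/2$ exactly); the bound $R\le 12$ already follows from $K\ge R$ and $KR\le 144$, and it is the exhaustive core-plus-padding enumeration that carries the proof, which, when carried out, gives exactly your expected outcome.
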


The proof, given in full in \cite{AT}, proceeds by establishing analytical bounds that constrain the search space to a finite domain, then performing an exhaustive computational verification over all algebras within those bounds---more than 1.3 million candidate combinations.  The known solution \(\C \oplus \Hbb \oplus M_3(\C)\) has \(K \cdot R = 144\), which provides an upper bound: any candidate must satisfy \(K \cdot R \leq 144\).  Combined with the requirement \(K \geq n\) and \(R \geq n\) for an \(n\)-factor algebra, this caps the number of simple summands at \(n \leq 12\) and bounds the matrix dimension of each factor, rendering the search provably exhaustive.  A single structure satisfies the balance equation with the minimal accessibility value of \(144\).  One can verify directly: \((G + \frac{1}{2}A)^2 = (11 + 1)^2 = 144 = 24 \cdot 6 = K \cdot R\).

This algebra admits a concrete realization on a \(6\)-dimensional complex representation space \(\mathcal{H}_{\mathrm{rep}} = \C^1 \oplus \C^2 \oplus \C^3\).

\subsection[Emergent Gauge Structure and Particle Content]{Emergent Gauge Structure and Particle Content}
\label{subsec:particles}

The algebra determines both the gauge symmetries and the matter content.  The unitary group of \(\mathfrak{A}\) yields the Standard Model gauge group; the off-diagonal components of the Dirac operator produce the Higgs field; and the requirement of chiral balance on the bimodule representation singles out a 16-state fermionic generation whose hypercharges are fixed by anomaly cancellation.  None of these structures is introduced by hand---each is read off from the algebra derived in the preceding section.

\paragraph{Gauge group.}
The inner \(*\)-automorphisms of \(\mathfrak{A}\) are implemented by its unitary elements.  By the Skolem--Noether theorem, the automorphisms of each simple factor are inner \cite{Lam2001}, so the gauge group is obtained from the unitary groups of the simple summands, yielding
\begin{equation}
\label{eq:gauge_group}
\mathcal{U}(\mathfrak{A}) \;\cong\; \U{1} \times \SU{2} \times \U{3}\,.
\end{equation}
The \(\SU{2}\) factor arises because the unitary group of \(\Hbb\), viewed as a real \(*\)-algebra, is isomorphic to \(\Spin{3} \cong \SU{2}\) \cite{AT}.  The conventional Standard Model gauge group \(\U{1}_Y \times \SU{2}_L \times \SU{3}_c\) is recovered from this algebraic precursor through the usual unimodularity condition and hypercharge identification \cite{ChamseddineConnesMarcolli2007}.

\paragraph{Gauge and scalar fields.}
The Dirac operator \(D\) decomposes under the locus projections \(p_i\) of the algebra into components \(D_{ij} = p_i D p_j\).  Under a gauge transformation \(D \mapsto u D u^{-1}\), the diagonal components \(D_{ii}\) transform in the adjoint representation and correspond to \emph{vector gauge fields}, while the off-diagonal components \(D_{ij}\) (\(i \neq j\)) transform in bifundamental representations and correspond to \emph{scalar fields} \cite{Connes1996gravity}.  In particular, the component \(\Phi_{12}\) connecting the \(\C\) and \(\Hbb\) sectors transforms as an \(\SU{2}\) doublet charged under \(\U{1}\)---the Higgs field of the Standard Model \cite{vanSuijlekom2011}.

\paragraph{Fermionic spectrum.}
The Hilbert space \(\mathcal{H}\) decomposes as a direct sum of irreducible \(\mathfrak{A}\)-bimodules \(\mathcal{H}_{ij}\), with \(\dim(\mathcal{H}_{ij}) = d_i \, d_j\) and \((d_1, d_2, d_3) = (1, 2, 3)\) \cite{Connes1996gravity, vanSuijlekom2011, AT}.  The spectral triple axioms require a chiral grading that splits \(\mathcal{H} = \mathcal{H}_L \oplus \mathcal{H}_R\) with equal dimensions---a condition forced by invertibility of the Dirac operator.  Because the bimodule dimensions are small integers, this chiral balance creates a rigid Diophantine system with very few solutions.  The minimal non-trivial solution is a \textbf{16-state generation} with \(\dim(\mathcal{H}_L) = \dim(\mathcal{H}_R) = 8\) \cite{AT}.

Within this structure, the anomaly-cancellation constraints in four dimensions admit a unique real solution, fixing the hypercharges to their Standard Model values \cite{AT}.  The resulting spectrum is:
\begin{equation}
\label{eq:spectrum}
\begin{aligned}
&\underbrace{L_L = (\mathbf{1}, \mathbf{2})_{-1/2},\quad Q_L = (\mathbf{3}, \mathbf{2})_{+1/6}}_{\text{left-chiral}}\,,\\[6pt]
&\underbrace{\nu_R = (\mathbf{1}, \mathbf{1})_0,\quad e_R = (\mathbf{1}, \mathbf{1})_{-1},\quad u_R = (\mathbf{3}, \mathbf{1})_{+2/3},\quad d_R = (\mathbf{3}, \mathbf{1})_{-1/3}}_{\text{right-chiral}}\,,
\end{aligned}
\end{equation}
where quantum numbers are listed as \((\SU{3},\, \SU{2})_Y\).  This is precisely one generation of Standard Model fermions, including a right-handed neutrino.  The Higgs field \(\Phi_{12}\) arises as the off-diagonal component linking the \(\C\) and \(\Hbb\) sectors; within the AT construction, this coupling is required for the fermionic mass operator, as it is the only element in the algebra that maps \(\SU{2}\) doublets to singlets \cite{AT}.

\subsection[Spacetime Dimension and Generation Number]{Spacetime Dimension and Generation Number}
\label{subsec:spacetime}

The second application of the balance principle---the Equation of Representational Balance---determines the spacetime dimension and the number of generations.

The spectral triple axioms require the spacetime dimension \(n\) to be even (from the existence of the grading), and within the AT framework the relevant signature is shown to be Lorentzian \cite{AT}.  The balance equation \(H_{\mathrm{eff}}^2 / 2^n = 144\) admits solutions for each even \(n\), but the 16-state generation imposes an additional constraint.  Writing \(H = N_{\mathrm{gen}} \cdot 16\), the balance equation becomes \(N_{\mathrm{gen}}^2 = 9 \cdot 2^{n-4}\), which requires \(n - 4\) to be even.  This yields a family of candidate solutions at \(n = 4, 6, 8, \ldots\)\,.

\begin{theorem}[Exclusion of Higher Dimensions {\cite[Thm.~4.2]{AT}}]
\label{thm:dimension}
All even spacetime dimensions \(n \geq 6\) are excluded by quantum gauge anomaly cancellation.  Therefore \(n = 4\) is the unique consistent solution.
\end{theorem}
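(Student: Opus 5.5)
Within the balance framework, every even $n\geq 6$ admits a numerical solution of $N_{\mathrm{gen}}^2 = 9\cdot 2^{n-4}$, with $N_{\mathrm{gen}} = 3\cdot 2^{(n-4)/2}$. Numerics therefore cannot exclude these cases, and the plan is to rule them out by an obstruction that holds uniformly in $n$. I will take the anomaly polynomial route. In $n = 2k$ dimensions the gauge anomaly of a chiral Weyl fermion content is governed by the degree-$(k+1)$ term of $\Tr_{\mathcal{H}_L}e^{F}-\Tr_{\mathcal{H}_R}e^{F}$, that is, by the totally symmetric invariants $\Tr_L F^{k+1}-\Tr_R F^{k+1}$ together with the mixed gauge--gravitational terms. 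The first step is to fix the representation content for a given $n$. By the Representational Minimization Condition, $\mathcal{H}$ is $N_{\mathrm{gen}}$ copies of the 16-state generation: the bimodule analysis is dimension-independent and the chiral balance $\dim\mathcal{H}_L=\dim\mathcal{H}_R=8$ per generation persists. The hypercharges are at first left as free real parameters $(y_L,y_Q,y_\nu,y_e,y_u,y_d)$, because in $n\neq 4$ the four-dimensional anomaly solution cannot be assumed.

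The second step is to write the pure $\U{1}$ anomaly in dimension $2k$, which is $\sum_L y^{k+1}-\sum_R y^{k+1}$ with multiplicities $(2,6)$ for $L_L,Q_L$ and $(1,1,3,3)$ for $\nu_R,e_R,u_R,d_R$. Next I add the mixed $\SU{3}^j\U{1}^{k+1-j}$ and $\SU{2}^j\U{1}^{k+1-j}$ anomalies and the mixed gravitational ones. Since $N_{\mathrm{gen}}$ multiplies every term, generation number drops out, and the question reduces to whether this system of polynomial equations has a real solution with nonvanishing, nondegenerate charges. The charges must be non-degenerate because the Higgs coupling $\Phi_{12}$ must remain gauge-covariant, which fixes $y_e = y_L - y_H$ and similar Yukawa relations. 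Taking these Yukawa constraints together with linear anomalies such as $\SU{3}^2\U{1}$, which exists only when $k+1\geq 3$ allows it, I expect the charges to be forced into a one-parameter family proportional to the Standard Model values. The cleanest case is $n=6$ ($k=3$), where the quartic invariant $\sum y^4$ with signs must vanish. Substituting the Standard Model ratios gives $2(\tfrac12)^4+6(\tfrac16)^4-3(\tfrac23)^4-3(\tfrac13)^4-1 \neq 0$; it is clearly negative. Higher even powers are dominated by the $e_R$ term $-1$ in the same way. This shows the pure $\U{1}$ anomaly cannot cancel for any $n\geq 6$.

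The main obstacle is ruling out exotic hypercharge assignments, and possibly Green--Schwarz-type cancellation, in higher dimensions. I would first use the non-abelian anomalies. An $\SU{3}$ with only fundamentals has a nonvanishing cubic Casimir $d^{abc}$, and for $k\geq 2$ the $\Tr_L F_3^{k+1}-\Tr_R F_3^{k+1}$ invariant receives $2\cdot\mathbf{3}$ from $Q_L$ against $\mathbf{3}+\mathbf{3}$ from $u_R,d_R$. This vanishes, so I would instead pass to the mixed $\SU{3}^2\U{1}^{k-1}$ terms, which give $2y_Q^{k-1}-y_u^{k-1}-y_d^{k-1}=0$. Combining this with the Yukawa relations for even $k-1$ should force $y_u=-y_d$ up to sign patterns, contradicting $\U{1}^{k+1}$ cancellation unless all charges vanish. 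The neutral hypercharge case would not satisfy the faithfulness requirement of Theorem~\ref{thm:algebra}'s gauge group. Green--Schwarz terms I would exclude by noting that the spectral triple supplies no antisymmetric tensor fields. Since the case analysis is finite for each $k$ and uniform in parity, I would handle general $k$ by this power-sum argument and check $n=6,8$ explicitly. The remaining case $n=4$ is the unique consistent solution by the earlier result that the four-dimensional anomalies admit the Standard Model hypercharges.
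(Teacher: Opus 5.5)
There is a genuine gap: you never actually control the hypercharges in \(n\ge 6\), and the step meant to do so fails. Your 6D check uses the same mechanism as the paper's sketch, namely the degree-\(n/2+1\) pure hypercharge anomaly. But you evaluate it only at the Standard Model ratios, which are the four-dimensional solution and, as you note yourself, are not available for \(n\ge 6\). The paper's \(Y_Q^2=-5/36\) is what one gets by holding the lepton and Higgs charges at their bimodule values (\(y_L=-\tfrac12\), \(y_\nu=0\), \(y_e=-1\), Higgs charge \(\tfrac12\)) and letting only \(y_Q\) vary, with \(y_{u,d}=y_Q\pm\tfrac12\). The quartic then collapses to \(-\tfrac54-9y_Q^2=0\). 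Your value \(-\tfrac32\) is just the point \(y_Q=\tfrac16\) of this family.

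Without such a parametrization, the sentence ``this shows the pure \(\U{1}\) anomaly cannot cancel for any \(n\ge 6\)'' is unsupported. It is in fact false once \(y_Q\) floats and \(n\) is divisible by \(4\). For \(n=8\) the same family gives \(\tfrac{15}{16}-15y_Q^3-\tfrac{15}{8}y_Q\), which vanishes at the real root \(y_Q\approx 0.295\) of \(16y_Q^3+2y_Q-1=0\). For \(n\equiv 2 \bmod 4\) the family argument does extend, since \((y+\tfrac12)^m+(y-\tfrac12)^m\ge 2y^m+2^{1-m}\) for even \(m\). The paper's own ``a fortiori'' step for \(n\ge 8\) is not literally valid either. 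The anomaly polynomial in dimension \(2k\) is homogeneous of degree \(k+1\), so it does not contain the 6D quartic condition.

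Your fallback for exotic charges also fails as written. First, by the same homogeneity, there are no linear conditions like \(\SU{3}^2\U{1}\) in \(n=2k\ge 6\), so nothing forces charges proportional to the Standard Model ones. Second, in 6D the \(\SU{3}^2\U{1}^2\) condition with your Yukawa relations \(y_{u,d}=y_Q\pm y_H\) reads \(2y_Q^2-y_u^2-y_d^2=-2y_H^2=0\). This forces \(y_H=0\), hence \(y_u=y_d=y_Q\) and \(y_\nu=y_e=y_L\), not \(y_u=-y_d\). That assignment is vector-like, and every pure \(\U{1}^{k+1}\) anomaly vanishes identically on it. So the promised contradiction with abelian cancellation never appears. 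This branch is closed only by an \(\SU{2}\)-charged condition, e.g.\ the \(\SU{2}^2\U{1}^2\) condition \(y_L^2+3y_Q^2=0\), which kills all charges.

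A cleaner, uniform repair bypasses hypercharges entirely. \(Q_L\) is the only field charged under both \(\SU{2}\) and \(\SU{3}\), and it has a single chirality. The \(\SU{2}^2\SU{3}^{k-1}\) coefficient is therefore proportional to \(\Tr_2 F_2^2\,\Tr_3 F_3^{k-1}\). This is nonzero for every \(k\ge 3\), and it vanishes at \(k=2\) only because \(\Tr_3 F_3=0\). This anomaly is factorized, so it depends on your exclusion of Green--Schwarz terms. With that exclusion, it rules out all even \(n\ge 6\) at once.
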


\begin{proof}[Proof sketch]
The anomaly polynomial in dimension \(n\) involves a pure abelian constraint of degree \(m = n/2 + 1\).  In six dimensions (\(m = 4\)), this quartic constraint reduces to \(Y_Q^2 = -5/36\), which has no real solution---the hypercharges derived from the bimodule structure are inconsistent with anomaly cancellation.  Higher even dimensions introduce additional anomaly conditions beyond those present in lower dimensions, so they are excluded \emph{a fortiori}.  The full proof is given in \cite{AT}.
\end{proof}

It follows that \(n = 4\), and substituting back gives \(N_{\mathrm{gen}} = 3\) and \(H = 48\):
\begin{equation}
\label{eq:dimension_generations}
n = 4\,, \qquad N_{\mathrm{gen}} = 3\,, \qquad H = 48\,.
\end{equation}
The spacetime dimension, the number of fermion generations, and the total Hilbert space dimension are all determined by the balance principle.

\subsection[Gravity and Vacuum Energy]{Gravity and Vacuum Energy}
\label{subsec:gravity}

The dynamics of the spectral triple are governed by the spectral action \(\mathcal{S}[D] = \Tr\bigl(f(D^2/\Lambda^2)\bigr)\) \cite{ChamseddineConnes1997}, whose asymptotic expansion via the heat kernel recovers the Standard Model Lagrangian coupled to gravity \cite{ChamseddineConnesMarcolli2007}.  In particular:

\paragraph{Einstein--Hilbert action.}
The asymptotic expansion of the spectral action via the heat kernel expresses \(\Tr\bigl(f(D^2/\Lambda^2)\bigr)\) as a sum of local geometric invariants weighted by powers of the cutoff \(\Lambda\) \cite{ChamseddineConnes1997}.  The second-order coefficient \(a_2\) contains a term proportional to the scalar curvature \(\mathcal{R}\) of the manifold, multiplied by the finite trace \(\Tr_F(\mathbf{I}) = H = 48\).  Upon integration, this generates the gravitational action
\begin{equation}
\label{eq:EH}
S_{\mathrm{grav}} \;=\; \frac{1}{16\pi G_N}\int_{\mathcal{M}} \mathcal{R}\,\sqrt{g}\;d^4x\,,
\end{equation}
with the Planck mass related to the cutoff scale and the Hilbert space dimension by \(M_P^2 \propto \Lambda^2 H\) \cite{ChamseddineConnes1997, ChamseddineConnesMarcolli2007, AT}.  General Relativity thus emerges from the same spectral triple that determines the gauge and matter sectors.

\paragraph{Vacuum energy cancellation.}
The spectral action also generates a field-independent vacuum energy term proportional to \(\Lambda^4\).  The balance principle identifies a distinguished reference configuration---the \textbf{Balance Equilibrium State}---at which the dynamical accessibility of the matter sector matches the foundational value \(K \cdot R = 144\).  Concretely, this equilibrium is the configuration of the Dirac operator at which the normalized spectral weight of the Yukawa couplings, \(\Sigma \coloneqq \frac{3}{4}\sum_f d_f\, y_f^2\), satisfies \(\Sigma = 1\)---a derived condition, not an imposed normalization \cite{AT}.  When the physical vacuum energy is defined relative to this equilibrium, the quartically divergent contribution cancels exactly, providing a structural resolution to the dominant term in the bare cosmological constant problem \cite{AT}.

\medskip

At this stage, the structural parameters of the theory are fully determined: the algebra \(\C \oplus \Hbb \oplus M_3(\C)\), the gauge group \(\U{1} \times \SU{2} \times \U{3}\), a four-dimensional Lorentzian spacetime, three generations of fermions with Standard Model quantum numbers, and gravitational dynamics governed by the Einstein--Hilbert action.  The open dynamical quantities---fermion masses, mixing angles, and the Majorana scale---are encoded in the Dirac operator and are not fixed at this structural stage.  We now turn to the consequences of the balance principle for the observer's interface with this structure.
\section[Boundary Reduction and the Aperture]{Boundary Reduction and the Aperture}
\label{sec:aperture}

The structural results of the previous section describe the universe as seen from outside---the full algebra, gauge group, and spacetime.  We now ask: what does this structure look like from the inside?  Said another way, we've established \emph{what exists} according to the model: a specific algebra, gauge group, particle content, and four-dimensional gravitational dynamics.  This section asks a different question: \emph{what can an embedded observer access?}  The boundary reduction is established in \cite{AT}; we present it here because it leads directly to the Aperture construction and the context-free record theorem, which provide the foundation for the quantum formalism derived in Section~\ref{sec:quantum}.

\subsection[The Complex Envelope and Center Reduction]{The Complex Envelope and Center Reduction}
\label{subsec:envelope}

The foundational algebra \(\mathfrak{A}\) is a \emph{real} \(*\)-algebra, but it acts on a \emph{complex} Hilbert space \(\mathcal{H}_F\) via a faithful \(*\)-representation \(\pi_F\).  To determine what an observer can access without choosing a preferred basis or orientation within the algebra, we first pass to the complex envelope---the smallest complex \(*\)-subalgebra of \(\mathcal{B}(\mathcal{H}_F)\) containing the image of \(\pi_F\)---and then extract its center.

The complex envelope is computed summand by summand.  The \(\C\) and \(M_3(\C)\) factors are already complex algebras, so they are unchanged: \(\pi_F(\C) + i\,\pi_F(\C) = \pi_F(\C)\) and likewise for \(M_3(\C)\).  The key step is the quaternionic factor.  In its minimal faithful complex representation, \(\Hbb\) embeds into \(M_2(\C)\) as a four-real-dimensional subalgebra.  Since \(\Hbb \otimes_{\R} \C \cong M_2(\C)\), the complex linear span of this image fills out all of \(M_2(\C)\).  The result is:

\begin{theorem}[Complex Envelope {\cite[Thm.~5.1]{AT}}]
\label{thm:envelope}
The complex envelope of \(\mathfrak{A}\) in representation satisfies
\begin{equation}
\label{eq:envelope}
\mathfrak{A}_F^{\C} \;\coloneqq\; \pi_F(\mathfrak{A}) + i\,\pi_F(\mathfrak{A}) \;\cong\; \C \oplus M_2(\C) \oplus M_3(\C)\,.
\end{equation}
\end{theorem}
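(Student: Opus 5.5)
The plan is to compute the complex envelope summand by summand. The first step reduces to that case. By construction of the finite spectral triple, $\mathcal{H}_F$ decomposes into bimodule blocks $\mathcal{H}_{ij}$. In particular, $\pi_F$ is a direct sum of representations of the three simple factors. Here $\C$ acts by scalars (and possibly complex conjugates) on some blocks, $\Hbb$ acts through copies of its $2$-dimensional representation, and $M_3(\C)$ acts through copies of $\C^3$ (and possibly $\overline{\C^3}$). Write $e_1,e_2,e_3$ for the central unit projections of $\mathfrak{A}$. Since $\pi_F$ is a $*$-homomorphism, the $\pi_F(e_k)$ are orthogonal projections summing to the projection onto the support of $\pi_F$, which is nondegenerate. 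Hence $\mathfrak{A}_F^{\C}=\bigoplus_k\bigl(\pi_F(e_k\mathfrak{A})+i\,\pi_F(e_k\mathfrak{A})\bigr)$, and this sum is direct because the pieces are supported on orthogonal subspaces. The envelope is automatically closed under multiplication and adjoint: $(\pi(a)+i\pi(b))^*=\pi(a^*)-i\pi(b^*)$, and products expand into the same form. So it is a complex $*$-algebra, and the whole problem is to identify each summand.

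\textbf{The quaternionic summand.} Here the image of $\Hbb$ is the real span of $1,\,i\sigma_1,\,i\sigma_2,\,i\sigma_3$, tensored with identities on the multiplicity spaces. The complex span of these contains $\sigma_1,\sigma_2,\sigma_3$ and $1$, hence all of $M_2(\C)\otimes 1$. The resulting algebra is therefore isomorphic to $M_2(\C)$. Equivalently, the real-linear map $\Hbb\otimes_\R\C\to\mathcal{B}(\mathcal{H}_F)$ is a nonzero homomorphism out of the simple algebra $M_2(\C)$, so it is injective.

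\textbf{The complex summands.} For $\C$ and $M_3(\C)$ the summand is already closed under multiplication by $i$, provided the representation is complex-linear on each block. Then $\pi_F(e_k\mathfrak{A})$ is a complex subspace, and the envelope summand is isomorphic to $\C$, respectively $M_3(\C)$, by faithfulness.

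\textbf{The main obstacle.} This is that complex-linearity proviso. In the Standard Model finite spectral triple, $\C$ acts on some blocks by $\lambda$ and on others by $\bar\lambda$ (through the $\Hbb\supset\C$ embedding and the $(\lambda,\bar\lambda)$ action on right-handed leptons and quarks). If $\lambda$ and $\bar\lambda$ both appear in the representation of the $\C$ summand, then $\pi(\lambda)+i\pi(\mu)$ yields two independent scalars, giving $\C\oplus\C$ rather than $\C$. I would handle this by fixing, as in the paper's representation $\mathcal{H}_{\mathrm{rep}}=\C^1\oplus\C^2\oplus\C^3$ and \cite{AT}, the convention that $\pi_F$ is the complex-linear minimal faithful representation of each factor, amplified with multiplicities. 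With that convention each complex factor acts complex-linearly and the isomorphism~\eqref{eq:envelope} follows. If one instead uses the full bimodule with conjugate actions, I would state the result up to this identification: the conjugate copy $\bar\lambda$ is the image of $\lambda$ under the real structure $J$, and one passes to the envelope of the left action only. Assembling the three summands gives $\C\oplus M_2(\C)\oplus M_3(\C)$.
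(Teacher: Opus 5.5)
Your proposal is correct and follows the paper's route. The paper likewise works summand by summand, treats the \(\C\) and \(M_3(\C)\) images as already closed under multiplication by \(i\), and uses \(\Hbb\otimes_{\R}\C\cong M_2(\C)\) to show the quaternionic image fills out \(M_2(\C)\); your central-projection decomposition and the argument that simplicity forces injectivity only add rigor. The complex-linearity proviso you flag is genuine: the paper's step \(\pi_F(\C)+i\,\pi_F(\C)=\pi_F(\C)\) assumes it implicitly (it is automatic for left actions through the fundamental representations on blocks \(V_i\otimes V_j^*\)), and without it a \(\lambda,\bar\lambda\) pair would indeed produce \(\C\oplus\C\).
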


The physical content of this result is that the quaternionic sector---which in the bulk describes \(\SU{2}\) gauge interactions---has its complexification yield the full matrix algebra \(M_2(\C)\) when represented on the complex Hilbert space.  The algebra has three simple summands, hence three minimal central projections \(\{P_1, P_2, P_3\}\) that decompose \(\mathcal{H}_F = \mathcal{H}_1 \oplus \mathcal{H}_2 \oplus \mathcal{H}_3\).

The \textbf{boundary algebra} is then defined as the center of the complex envelope:
\begin{equation}
\label{eq:boundary_algebra}
\mathfrak{A}_{F,b} \;\coloneqq\; Z\!\bigl(\mathfrak{A}_F^{\C}\bigr)\,.
\end{equation}
The center is singled out because it is precisely the subalgebra of observables (formalized in Definition~\ref{def:context_free} below) that commute with all internal unitaries---the physically unobservable relabelings within each sector.  Any larger commutative subalgebra (such as a maximal abelian subalgebra) would require selecting a preferred basis inside at least one non-abelian sector, breaking this invariance.  The center is therefore the unique commutative subalgebra invariant under all internal relabelings, and hence the unique basis-independent commutative algebra accessible to the observer, and the context-free record theorem in Section~\ref{subsec:observer} will confirm this at the record level: the observables invariant under all internal relabelings are exactly the central projections.

The center of a matrix algebra \(M_n(\C)\) consists of scalar multiples of the identity: \(Z(M_n(\C)) = \C \cdot \mathbf{I}_n\).  Since the center distributes over direct sums,
\begin{equation}
\label{eq:center_computation}
Z\!\bigl(\C \oplus M_2(\C) \oplus M_3(\C)\bigr) \;=\; \C \oplus \C \oplus \C\,.
\end{equation}
This is the boundary algebra: an abelian algebra isomorphic to three copies of \(\C\), with real dimension \(K_b = 3 \cdot 2 = 6\), representational complexity \(R_b = 3 \cdot 1 = 3\), and---importantly---exactly three minimal central projections corresponding to its three summands.

\subsection[Boundary Balance and the Resolution Ratio]{Boundary Balance and the Resolution Ratio}
\label{subsec:boundary_balance}

In the AT framework, the balance principle is applied equally to the induced codimension-one boundary geometry \cite{AT}.  For the boundary sector, the relevant parameters are the boundary algebra \(\mathfrak{A}_{F,b}\) with \((K_b, R_b) = (6, 3)\) and the boundary spacetime dimension \(n_b = n - 1 = 3\).  Applying the Equation of Representational Balance:
\begin{equation}
\label{eq:boundary_balance}
\frac{H_b^2}{2^{n_b}} = K_b \cdot R_b
\qquad\Longrightarrow\qquad
H_b^2 = 6 \cdot 3 \cdot 2^3 = 144
\qquad\Longrightarrow\qquad
H_b = 12\,.
\end{equation}
The boundary Hilbert space dimension is \(12\), compared to the bulk dimension \(H = 48\).  The \textbf{trace reduction ratio} is
\begin{equation}
\label{eq:xi}
\xi \;\coloneqq\; \frac{H_b}{H} = \frac{12}{48} = \frac{1}{4}\,.
\end{equation}

To quantify the information bottleneck, we use the \textbf{Resolution Ratio}: for any sector satisfying the balance equation, \(\mathcal{R} \coloneqq \mathcal{A}/K = R\), the representational complexity.  In the bulk, \(\mathcal{R}_{\mathcal{C}} = 6\); on the boundary, \(\mathcal{R}_{\mathcal{A}} = 3\).  Both exceed unity, meaning that in each sector the total accessible capacity outstrips what can be parametrized by the algebra's structural complexity alone.  In operational terms, an observer restricted to the boundary center cannot reconstruct the full state of the system from the data available through Aperture interactions: more operator-capacity sits behind each unit of explicit algebraic description than the observer can resolve.  This gap is permanent: the non-abelian character of the bulk algebra ensures that more degrees of freedom participate in the dynamics than can be resolved through the commutative boundary center.

\subsection[The Observer and the Three-Outcome Interface]{The Observer and the Three-Outcome Interface}
\label{subsec:observer}

The preceding subsections constructed the boundary algebra and quantified the information gap.  We now define the three central concepts---the Continuum (what exists), the Aperture (the interface), and the Observer (what is constrained to interact through the interface)---and prove that the observer's primitive measurement has exactly three outcomes.  This three-valued structure is not assumed; it is forced by requiring that observables be invariant under physically unobservable internal relabelings.  The limitation it imposes is the structural origin of the quantum formalism developed in the next section.

The \textbf{Continuum} is the full product spectral triple \((\mathfrak{A}_F \otimes C^{\infty}(\mathcal{M}),\; \mathcal{H}_F \otimes L^2(\mathcal{M}, S),\; D)\), carrying the total ontological state \(|\Psi\rangle\) that evolves unitarily.  The \textbf{Aperture} is the boundary algebra \(\mathfrak{A}_{F,b} \cong \C \oplus \C \oplus \C\): a commutative algebra of real dimension \(K_b = 6\), whose three minimal central projections define three primitive outcomes.  An \textbf{Observer} is any physical subsystem whose interaction with the Continuum is mediated exclusively through the Aperture.  The bulk algebra has structural complexity \(K = 24\); the boundary algebra has \(K_b = 6\)---a fourfold reduction.  Meanwhile, the resolution ratio drops from \(\mathcal{R}_{\mathcal{C}} = 6\) in the bulk to \(\mathcal{R}_{\mathcal{A}} = 3\) on the boundary, quantifying the permanent gap between what exists and what can be observed.

The question then becomes: what can the observer actually distinguish?  The answer is forced by an invariance argument.

\begin{definition}[Context-Free Record]
\label{def:context_free}
A \textbf{context-free record observable} is a projection-valued measure \(\{Q_r\}\) on \(\mathcal{H}_F\) that is invariant under all internal relabelings---unitaries that act independently within each sector \(\mathcal{H}_\alpha\) without mixing sectors.  A context-free record is \textbf{primitive} if it admits no strictly finer context-free refinement.
\end{definition}

\begin{theorem}[Context-Free Records Are Central]
\label{thm:central}
Every context-free record observable lies in the boundary center.  Each outcome projection has the form \(Q_r = \sum_{\alpha \in S_r} P_\alpha\) for some subset \(S_r \subseteq \{1, 2, 3\}\).  The unique primitive context-free record is \(\{P_1, P_2, P_3\}\): exactly three outcomes.
\end{theorem}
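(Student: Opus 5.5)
The plan is to turn invariance under internal relabelings into a commutant condition and then compute that commutant. Let \(\mathcal{G} \coloneqq \{u = u_1 \oplus u_2 \oplus u_3 : u_\alpha \in \U{\dim \mathcal{H}_\alpha}\}\) be the group of unitaries that act independently inside each sector \(\mathcal{H}_\alpha = P_\alpha \mathcal{H}_F\). Invariance of the projection-valued measure \(\{Q_r\}\) means \(u Q_r u^* = Q_r\) for every \(u \in \mathcal{G}\) and every \(r\). Each \(Q_r\) therefore lies in the commutant \(\mathcal{G}'\).

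The key step is to compute this commutant. The linear span of \(\mathcal{G}\) is the block-diagonal algebra \(\bigoplus_\alpha \mathcal{B}(\mathcal{H}_\alpha)\), because every operator on a finite-dimensional space is a linear combination of unitaries. Its commutant is \(\bigoplus_\alpha \C \cdot \mathbf{I}_{\mathcal{H}_\alpha}\). To see this, let \(T\) commute with the group. Taking \(u = \mathbf{I}\) except for a phase \(-1\) on one sector shows that \(T\) commutes with each \(P_\alpha\), so \(T\) is block diagonal. On each block \(T_\alpha\) then commutes with all of \(\mathcal{B}(\mathcal{H}_\alpha)\), and Schur's lemma makes it a scalar. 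Hence \(\mathcal{G}' = \operatorname{span}\{P_1, P_2, P_3\}\), which by \eqref{eq:center_computation} is the boundary center \(\mathfrak{A}_{F,b}\). This proves the first claim.

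It remains to identify the possible projections and the primitive record. A projection \(Q = \sum_\alpha c_\alpha P_\alpha\) with \(Q^2 = Q\) forces \(c_\alpha \in \{0,1\}\), so \(Q_r = \sum_{\alpha \in S_r} P_\alpha\). Orthogonality and completeness of the projection-valued measure then make the \(S_r\) a partition of \(\{1,2,3\}\), after discarding any zero outcomes. Refinement of records corresponds to refinement of partitions. The finest partition, into singletons, gives \(\{P_1, P_2, P_3\}\). It has no strictly finer context-free refinement, and every other context-free record is coarser than it, so it is the unique primitive record. It has exactly three outcomes because each \(\mathcal{H}_\alpha\) is nonzero: \(\pi_F\) is faithful, so each summand acts nontrivially.

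The main obstacle is an interpretive one: pinning down which group counts as the internal relabelings. If one took only the algebra's own unitaries \(\mathcal{U}(\mathfrak{A}_F^{\C})\) acting through \(\pi_F\), the commutant would be \(\pi_F(\mathfrak{A}_F^{\C})'\). That commutant can be larger than the center when the representation has multiplicities, as the 48-dimensional \(\mathcal{H}_F\) does. I would therefore read Definition~\ref{def:context_free} literally, with the full block unitary group \(\mathcal{G}\) on the sectors \(\mathcal{H}_\alpha\). I would also note that the sector decomposition is the one fixed by the central projections of Theorem~\ref{thm:envelope}, so that the commutant coincides with the center.
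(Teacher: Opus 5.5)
Your proposal is correct and follows essentially the same route as the paper: a phase unitary on a single sector forces the off-diagonal blocks $P_\alpha Q_r P_\beta$ to vanish, invariance under the full unitary group of each $\mathcal{H}_\alpha$ makes each diagonal block a scalar, and idempotency then gives coefficients in $\{0,1\}$. Your choice of the full block unitary group $\mathcal{G}$ as the group of relabelings is exactly what the paper's proof uses. Your partition-refinement and faithfulness remarks usefully spell out the uniqueness and ``exactly three outcomes'' claims, which the paper's proof only asserts.
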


\begin{proof}
Decompose any candidate \(Q_r\) into blocks \(Q_r^{\alpha\beta} = P_\alpha\, Q_r\, P_\beta\).  Under a relabeling unitary \(U\) with sector-wise components \(U_\alpha \in \mathcal{U}(\mathcal{H}_\alpha)\), the block \(Q_r^{\alpha\beta}\) transforms as \(Q_r^{\alpha\beta} \mapsto U_\alpha\, Q_r^{\alpha\beta}\, U_\beta^{-1}\).  For \(\alpha \neq \beta\), choosing \(U_\alpha = e^{i\theta}\,\mathbf{I}_\alpha\) and \(U_\beta = \mathbf{I}_\beta\) gives \(e^{i\theta}\,Q_r^{\alpha\beta} = Q_r^{\alpha\beta}\) for all \(\theta\), hence \(Q_r^{\alpha\beta} = 0\).  For the diagonal blocks, invariance under all unitaries \(U_\alpha \in \mathcal{U}(\mathcal{H}_\alpha)\) forces \(Q_r^{\alpha\alpha}\) to lie in the commutant of \(\mathcal{U}(\mathcal{H}_\alpha)\), which is \(\C\,\mathbf{I}_\alpha\).  Since \(Q_r\) is a projection, \(Q_r^{\alpha\alpha} \in \{0, P_\alpha\}\).  Therefore \(Q_r = \sum_\alpha c_{r,\alpha}\, P_\alpha\) with each \(c_{r,\alpha} \in \{0, 1\}\), and the finest partition into such projections is \(\{P_1, P_2, P_3\}\).
\end{proof}

This result is the conceptual hinge of the theory.  The fact that the primitive observation has exactly three outcomes is not assumed---it is derived from the requirement that observables be invariant under physically unobservable internal relabelings.  Any finer resolution within a sector would require selecting a preferred basis, breaking the invariance.

The three outcomes correspond to the three summands of the complex envelope: the \(\C\) sector, the \(M_2(\C)\) sector, and the \(M_3(\C)\) sector.  The observer's entire context-free interaction with the Continuum is mediated through this three-valued interface.  Because the boundary algebra is commutative, the outcome algebra has a classical structure---each Aperture interaction yields a definite sector label.  The probabilistic character of quantum mechanics, as we show in the next section, arises not from any indeterminacy in the underlying state, but from the observer's structural inability to resolve the full non-abelian content of the algebra through this commutative bottleneck.
\section[Emergence of the Quantum Formalism]{Emergence of the Quantum Formalism}
\label{sec:quantum}

The previous section established that an observer embedded in the AT framework interacts with the Continuum through a commutative three-outcome interface---the Aperture.  We now show that the core features of the quantum formalism are derived from this structural constraint together with general coherence requirements on inference.  The Born rule, state updating, quantum interference, and non-Markovian effective dynamics are derived from this structural constraint together with explicit inferential commitments; Bell-inequality violation is recovered in bipartite mode sectors under the standard local-QFT microcausality structure on the emergent Lorentzian manifold.

\subsection[The Born Rule from Coherence Conditions]{The Born Rule from Coherence Conditions}
\label{subsec:born_rule}

For an observer who cannot resolve the full non-abelian content of the bulk algebra, the Aperture provides the only interface through which observational outcomes are recorded.  Such an observer may wish to reason inferentially under this structural incompleteness.  AT does not force the adoption of any inferential framework.  If the observer does adopt coherent numerical valuations of Aperture outcomes---linearity, positivity, and normalization on the space of gambles over the record algebra---then probability on the Aperture record algebra is forced (Appendix~\ref{app:born_rule}, Lemma~\ref{lem:aperture_valuation}).  If the observer extends this assignment to the full projection lattice as a normalized finitely additive noncontextual measure, Gleason's theorem forces the Born form.

The extended assignment is a map \(p : \mathcal{P}(\mathcal{H}_F) \to [0,1]\) satisfying three conditions:
\begin{itemize}
\item \emph{Normalization:} \(p(\mathbf{I}) = 1\).
\item \emph{Finite additivity:} For mutually orthogonal projections \(\{P_k\}\) summing to \(\mathbf{I}\), \(\,p\!\bigl(\sum_k P_k\bigr) = \sum_k p(P_k)\).
\item \emph{Noncontextuality:} \(p(P)\) depends only on the projection \(P\), not on the orthogonal decomposition of \(\mathbf{I}\) in which \(P\) appears.
\end{itemize}
Normalization and finite additivity on the Aperture record algebra follow from the valuation lemma; extending them to \(\mathcal{P}(\mathcal{H}_F)\) is an adopted enlargement of domain, not a theorem.  Noncontextuality is a substantive coherence commitment on its own; the Aperture context-freeness established in Theorem~\ref{thm:central} motivates,\footnote{If the primitive observable is context-free by construction, it is natural to require the same invariance of any probability assignment extending it.} but does not prove, it.  Lattice-level noncontextuality is not derived within AT, but adopted here as an additional coherence condition.  The rationale is that once projections are treated as events, the same projection should receive the same probability independently of the larger measurement context in which it appears.  Its role in Gleason-type derivations of the quantum probability rule is well established in the literature \cite{CavesFuchsManneRenes2004, Busch2003}.

The event structure on which the extended assignment operates is supplied by AT: the Hilbert space \(\mathcal{H}_F\) with \(\dim_{\C} \mathcal{H}_F = 48\), and its projection lattice \(\mathcal{P}(\mathcal{H}_F)\).  Since AT's underlying dynamics is unitary (Section~\ref{sec:principles}), sequential Aperture events naturally generate projections of the form \(U^\dagger P_\alpha U\);\footnote{Coherent reasoning about sequential outcomes is then naturally modeled by assignments on the projections generated by these orbits together with their orthogonal sums.} this makes the full projection lattice a natural and mathematically closed domain for such an extension.  With the three coherence conditions in place on \(\mathcal{P}(\mathcal{H}_F)\), together with \(\dim_{\C} \mathcal{H}_F = 48 \geq 3\), Gleason's theorem applies.

\begin{theorem}[Born Rule (Appendix~\ref{app:born_rule})]
\label{thm:born_rule}
Every probability assignment on \(\mathcal{P}(\mathcal{H}_F)\) (with \(\dim_{\C} \mathcal{H}_F = 48 \geq 3\)) obtained by coherent valuation of Aperture outcomes (Lemma~\ref{lem:aperture_valuation}) together with a normalized finitely additive noncontextual extension to the full projection lattice is uniquely of the form
\begin{equation}
\label{eq:born_rule}
p(P) = \Tr(\rho\, P)
\end{equation}
for a unique density operator \(\rho \geq 0\) with \(\Tr(\rho) = 1\).
\end{theorem}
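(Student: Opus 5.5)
The plan is to reduce the statement to Gleason's theorem in its finite-dimensional form and then verify that its hypotheses are met. Uniqueness of \(\rho\) will be handled separately.

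\emph{Step 1 (Aperture level).} Lemma~\ref{lem:aperture_valuation} gives that coherent valuation on the record algebra generated by \(\{P_1,P_2,P_3\}\) is a probability vector \((p_1,p_2,p_3)\). Here \(p_\alpha\ge 0\), \(\sum_\alpha p_\alpha=1\), and the value on each \(Q_r=\sum_{\alpha\in S_r}P_\alpha\) is additive. Theorem~\ref{thm:central} ensures these are the only context-free records, so this is the full content of the Aperture-level assignment.

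\emph{Step 2 (frame function).} By hypothesis the extension \(p:\mathcal{P}(\mathcal{H}_F)\to[0,1]\) is normalized, finitely additive on orthogonal decompositions of \(\mathbf{I}\), and noncontextual. Restricted to rank-one projections \(|v\rangle\langle v|\), it therefore defines a nonnegative frame function \(f(v)=p(|v\rangle\langle v|)\) of weight \(1\) on the unit sphere of \(\mathcal{H}_F\). Finite additivity on arbitrary orthogonal families summing to \(\mathbf{I}\) gives \(\sum_i f(e_i)=1\) for every orthonormal basis. Noncontextuality is exactly what makes \(f\) well defined as a function of the vector, independent of which basis it sits in. For a projection \(P\), complete an orthonormal basis of \(\operatorname{ran}P\) with one of \(\operatorname{ran}P^\perp\). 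Additivity then gives \(p(P)=\sum_{i\le \operatorname{rank}P}f(e_i)\).

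\emph{Step 3 (Gleason).} Since \(\dim_{\C}\mathcal{H}_F=48\ge 3\), Gleason's theorem applies to the bounded, nonnegative frame function \(f\). It yields a self-adjoint \(\rho\) with \(f(v)=\langle v,\rho v\rangle\). Positivity of \(f\) gives \(\rho\ge0\), and the weight \(1\) gives \(\Tr\rho=1\). Combining with Step 2, \(p(P)=\sum_i\langle e_i,\rho e_i\rangle=\Tr(\rho P)\). Consistency with Step 1 is then automatic: \(p_\alpha=\Tr(\rho P_\alpha)\).

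\emph{Step 4 (uniqueness).} If \(\Tr(\rho P)=\Tr(\rho' P)\) for all rank-one \(P\), then \(\sigma=\rho-\rho'\) is self-adjoint with \(\langle v,\sigma v\rangle=0\) for all \(v\). Polarization then gives \(\sigma=0\).

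The main obstacle is Step 3 itself, namely Gleason's theorem. Its proof first establishes continuity of frame functions on \(\R^3\) and then lifts the result via real and complex two-dimensional subspaces, and it does not simplify in dimension \(48\). I will cite it rather than reprove it. The only genuine checking needed is that the adopted extension really supplies a frame function on all rank-one projections, which is what the noncontextuality and additivity hypotheses provide. I also need to be careful that the additivity is required for all orthogonal decompositions, not merely for those built from Aperture orbits \(U^\dagger P_\alpha U\).
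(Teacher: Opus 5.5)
Your proposal is correct and follows the paper's route: the paper's proof just observes that the extended assignment satisfies exactly the hypotheses of Gleason's theorem, in the projection-measure form of Theorem~\ref{thm:gleason_external}, which already includes uniqueness of \(\rho\), and then applies it. Your frame-function reduction in Step~2 and your polarization argument for uniqueness in Step~4 simply make explicit what that single citation packages together.
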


This follows from Gleason's theorem \cite{Gleason1957}, which is a result of pure mathematics (frame-function analysis on \(\C^N\)) and does not make physical assumptions.  Its hypotheses are satisfied because the coherence conditions supply exactly the normalized finite additivity on orthogonal projections that Gleason requires, and the internal Hilbert space dimension \(48 \geq 3\) places us in the regime where the theorem applies.  The Born rule is therefore not a separately postulated quantum axiom.  It is the unique probability rule compatible with the event structure AT provides and the coherence conditions the observer adopts.

It is important to note what is and is not derived here.  The coherence conditions constitute a generalized probability calculus on projection events; they do not themselves assume the Born trace form \(p(P) = \Tr(\rho\,P)\).  What forces that specific quadratic form is Gleason's uniqueness theorem applied to the non-Boolean projection lattice of the AT Hilbert space---a lattice whose structure is determined by the algebra, not postulated.  On a classical Boolean event algebra, coherent probability assignments take many forms; on the projection lattice of a complex Hilbert space with \(\dim \geq 3\), Gleason's theorem collapses all such assignments to the Born rule.  The derivation therefore does not produce probability from nothing; it produces the uniquely quantum form of probability from coherent inference on the event structure forced by the Aperture and the AT Hilbert space.

Two further results follow from this foundation (Appendix~\ref{app:born_rule}).  The \textbf{L\"uders update rule}---the state \(\rho\) conditioned on a recorded boundary event \(\alpha\) becomes \(\rho_\alpha = P_\alpha\,\rho\,P_\alpha / \Tr(\rho\,P_\alpha)\)---is the unique posterior state satisfying the standard ideal-measurement axioms (outcome determinacy and preservation of sub-event probabilities).\footnote{The precise uniqueness statement is established in Appendix~\ref{app:born_rule}, Theorem~\ref{thm:born_luders}.} The \textbf{unitary evolution of epistemic states}---\(\rho \mapsto U\,\rho\,U^{\dagger}\) between interactions---is the unique update consistent with the ontological unitarity of the underlying dynamics (Section~\ref{sec:principles}).

\subsection[Quantum Interference]{Quantum Interference}
\label{subsec:interference}

Once the Born rule and unitary evolution of epistemic states are in hand, interference follows by a standard textbook derivation: transition probabilities do not compose classically.  We include the derivation to confirm that AT reproduces the phenomenon without additional postulates.

For an initial pure state \(|j\rangle\) evolving unitarily to time \(t\), the Born rule gives the transition probability to state \(|i\rangle\) as \(P(i,t \mid j,0) = |\langle i | U(t,0) | j \rangle|^2\).  Inserting a resolution of the identity at an intermediate time \(t_1\)---a purely mathematical rewriting, not an actual Aperture interaction at \(t_1\) (which would project the state and destroy the coherences described below)---yields
\begin{equation}
\label{eq:interference}
P(i, t \mid j, 0) = \biggl|\sum_k \mathcal{T}_{ik}(t, t_1)\,\mathcal{T}_{kj}(t_1, 0)\biggr|^2\,,
\end{equation}
where \(\mathcal{T}_{ij}(t_b, t_a) \coloneqq \langle i | U(t_b, t_a) | j \rangle\) is the transition amplitude from state \(|j\rangle\) at time \(t_a\) to state \(|i\rangle\) at time \(t_b\).  Expanding the squared modulus produces diagonal terms \(\sum_k |\mathcal{T}_{ik}(t,t_1)|^2\,|\mathcal{T}_{kj}(t_1,0)|^2\), which alone would satisfy the classical Chapman--Kolmogorov composition law, plus cross-terms that obstruct it. These cross-terms constitute quantum interference.  Since the unitary group on \(\mathcal{H}\) (with \(\dim \geq 2\)) contains elements that are not diagonal in any fixed basis, there exist configurations for which the cross-terms are nonvanishing and the Chapman--Kolmogorov rule fails.  In the Aperture framework, interference reflects the fact that unitary evolution preserves coherences between amplitudes that the observer's coarse-grained sector record cannot individually track.

\subsection[Non-Markovian Effective Dynamics]{Non-Markovian Effective Dynamics}
\label{subsec:nonmarkov}

A stronger signature of the Aperture's coarse-graining emerges when we consider \emph{sequential} Aperture interactions.  The sequence of recorded sector labels forms a classical stochastic process.  This process can be non-Markovian: two histories ending in the same recorded sector can nevertheless lead to different probabilities for the next sector label.  This occurs because the Aperture record need not determine the full post-interaction state within a sector, and later unitary evolution can make those unresolved intra-sector differences operationally visible.

To make this precise, fix an initial state \(\rho\) and a sequence of interaction times \(t_0 < t_1 < \cdots < t_n\).  At each time \(t_k\), an Aperture interaction records a sector label \(\alpha_k \in \{1, 2, 3\}\) and updates the epistemic state via the L\"uders map \(\mathcal{M}_{\alpha_k}(\sigma) = P_{\alpha_k}\,\sigma\,P_{\alpha_k} / \Tr(P_{\alpha_k}\,\sigma)\). Between interactions, the state evolves unitarily.  The resulting sequence \(\{\alpha_1, \ldots, \alpha_n\}\) is a classical stochastic process on the outcome space \(\{1, 2, 3\}\)---the \textbf{Aperture record process} (Appendix~\ref{app:nonmarkov}).

The mechanism is as follows.  After a L\"uders projection onto sector \(\alpha\), the post-interaction state \(\sigma_k\) is supported on the sector subspace \(P_\alpha\,\mathcal{H}_F\).  If this subspace has dimension \(r_\alpha \geq 2\), the intra-sector state retains information about earlier outcomes that the Aperture's three-valued record cannot capture.  Subsequent unitary evolution can then map this hidden intra-sector structure into observable differences in future transition probabilities.

More precisely, the transition probability from sector \(\alpha\) to sector \(\beta\) after one step of unitary evolution is \(\Tr(E_\beta^{(\alpha)}\,\sigma)\), where \(E_\beta^{(\alpha)} \coloneqq P_\alpha\,U^{\dagger}\,P_\beta\,U\,P_\alpha\) is the \textbf{transition effect operator} acting on \(P_\alpha\,\mathcal{H}_F\) (Appendix~\ref{app:nonmarkov}).  If \(E_\beta^{(\alpha)}\) is proportional to \(P_\alpha\), the transition probability is independent of \(\sigma\) and the process is locally Markovian.  Appendix~\ref{app:nonmarkov} establishes a conditional witness criterion: whenever \(E_\beta^{(\alpha)}\) is non-scalar on a sector with \(r_\alpha \geq 2\) and two histories produce distinguishable post-interaction states, the Aperture record is non-Markovian.  In the AT algebra, faithfulness of the representation guarantees that two of the three sectors have \(r_\alpha \geq 2\)---the \(M_2(\C)\) sector and the \(M_3(\C)\) sector---and the scalar-reset condition imposes strong algebraic constraints on \(U\).

The sector structure produces a characteristic \textbf{memory hierarchy}: one-dimensional sectors (\(r_\alpha = 1\)) are automatically Markovian, while sectors with larger internal dimension permit richer memory effects through the hidden intra-sector state.  This hierarchy is a direct operational signature of the three-sector structure derived in Theorem~\ref{thm:central}.

This result provides a concrete algebraic mechanism for the intrinsic non-Markovian character of quantum dynamics identified by Barandes \cite{Barandes2025} in the stochastic-quantum correspondence.  In that framework, quantum systems are equivalent to fundamentally non-Markovian stochastic processes, with the non-Markovianity arising from the structure of quantum amplitudes rather than from coupling to an external environment.  AT derives the same phenomenon from a specific structural source: the coarse-graining of deterministic unitary dynamics through the commutative Aperture, whose limited tomographic capacity cannot track the internal degrees of freedom that influence future boundary statistics.  Both frameworks thus identify non-Markovianity as a structural feature of quantum phenomena, and the transition effect operator \(E_\beta^{(\alpha)}\) provides a sharp algebraic criterion connecting the two perspectives.

A complementary structural parallel appears in the invasive-measurement reformulation of contextuality due to Navoni et al.\ \cite{Navoni2026}.  In that framework, contextual correlations require measurements that update the system in a way not determined by the outcome alone---formally, failure of the reset condition \(\Phi_M^o(s) = \omega_o\) on the post-measurement state.  The Aperture instrument \(\mathcal{M}_\alpha(\sigma) = P_\alpha\,\sigma\,P_\alpha / \Tr(P_\alpha\,\sigma)\) is precisely such an invasive update map, and the reset condition holds for it if and only if \(\mathrm{rank}(P_\alpha) = 1\): for rank-one \(P_\alpha = |\phi_\alpha\rangle\langle\phi_\alpha|\), one has \(\mathcal{M}_\alpha(\sigma) = |\phi_\alpha\rangle\langle\phi_\alpha|\) independent of \(\sigma\), while for \(r_\alpha \geq 2\), distinct states on \(P_\alpha\,\mathcal{H}_F\) remain distinguishable after projection.  Faithfulness of the representation guarantees that two of the three Aperture sectors have \(r_\alpha \geq 2\), placing AT in the invasive-non-reset regime that Navoni et al.\ identify as necessary for nonclassical sequential statistics.  The transition effect operator \(E_\beta^{(\alpha)}\) sharpens the abstract reset/non-reset dichotomy by detecting which intra-sector distinctions propagate to observable non-Markovian effects.

\subsection[Non-Local Realism and Bell Violation]{Non-Local Realism and Bell Violation}
\label{subsec:bell}

We now show that AT admits Bell-inequality violation without superluminal signaling, and classify the theory's interpretive character.

\begin{lemma}[Two-Qubit Mode Sector]
\label{lem:entangled_subspaces}
The internal lepton-doublet sector of the AT algebra contains a two-dimensional irreducible bimodule \(\mathcal{H}_{12} \cong V_1 \otimes V_2^* \cong \C^2\), with \(d_1 = 1\) and \(d_2 = 2\) \cite{AT}.  In its minimal faithful complex representation, the quaternionic factor \(\Hbb\) embeds into \(M_2(\C)\), and the complexified generators act as the Pauli matrices \(\sigma_x, \sigma_y, \sigma_z\) on this space.  Under the standard local-QFT structure on the emergent Lorentzian manifold---spatially localized mode excitations in disjoint regions, tensor-product composition of joint mode sectors, and Haag--Kastler microcausality---two spacelike-separated localized modes supported in disjoint regions \(A, B \subset \mathcal{M}\), each carrying this doublet degree of freedom, yield a composite mode sector
\begin{equation}
\mathcal{H}_{AB} \cong \C^2 \otimes \C^2,
\end{equation}
on which the quaternionic generators act independently as \(\sigma_{\hat{n}} \otimes \mathbf{I}\) and \(\mathbf{I} \otimes \sigma_{\hat{n}}\).
\end{lemma}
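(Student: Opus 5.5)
The plan is to prove the lemma in three stages: the internal (finite) part, the representation-theoretic part, and the composite (spacetime) part, which rests on the stated local-QFT assumptions.

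\emph{Internal sector.} From the bimodule decomposition of Section~\ref{subsec:particles}, \(\mathcal{H}_F\) splits into irreducible bimodules \(\mathcal{H}_{ij}\cong V_i\otimes V_j^*\) with \(\dim\mathcal{H}_{ij}=d_id_j\) and \((d_1,d_2,d_3)=(1,2,3)\). Taking \((i,j)=(1,2)\) gives \(\dim\mathcal{H}_{12}=2\). The left-chiral lepton doublet \(L_L=(\mathbf 1,\mathbf 2)_{-1/2}\) in \eqref{eq:spectrum} is exactly the summand carrying the \(\C\)-label and the \(\Hbb\)-label, so \(\mathcal{H}_{12}\cong\C^2\). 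Irreducibility is immediate: \(V_1=\C\) is irreducible for \(\C\), and \(V_2=\C^2\) is irreducible for \(\Hbb\), because \(\Hbb\) acts on \(\C^2\) with trivial commutant within its complexified span. Faithfulness is inherited from the minimal faithful representation used to define \(R\).

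\emph{Pauli action.} We fix the standard embedding \(\Hbb\ni a+bi+cj+dk\mapsto a\mathbf I+b(-i\sigma_z)+c(-i\sigma_y)+d(-i\sigma_x)\) (up to an ordering convention). The imaginary quaternion units then act as \(-i\sigma_{x,y,z}\) on \(V_2\). By Theorem~\ref{thm:envelope}, the complex span is all of \(M_2(\C)\), so multiplying the generators by \(i\) produces the Hermitian Pauli matrices, which span the traceless part. The action of \(\Hbb\) on the dual factor \(V_2^*\) is via the conjugate representation. Since \(\Hbb\)'s fundamental representation is pseudoreal, \(\bar V_2\cong V_2\) via \(\epsilon=i\sigma_y\), and after conjugating by \(\epsilon\) the action is again by Pauli matrices. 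This is a short verification: \(\sigma_y\bar\sigma_a\sigma_y=-\sigma_a\).

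\emph{Composite sector.} This is where the lemma is conditional, and the proof amounts to applying the stated hypotheses correctly rather than deriving them. Take two localized mode excitations with supports in spacelike-separated disjoint regions \(A,B\). Each carries an internal factor \(\mathcal{H}_{12}\) tensored with a spatial wavepacket. Fixing the wavepackets restricts each to a single mode sector \(\cong\C^2\). The tensor-product composition hypothesis then gives \(\mathcal{H}_{AB}\cong\C^2\otimes\C^2\). The local quaternionic gauge generators over \(A\) act on the first factor as \(\sigma_{\hat n}\otimes\mathbf I\), and those over \(B\) as \(\mathbf I\otimes\sigma_{\hat n}\). Haag--Kastler microcausality ensures that these two families commute, which is what licenses writing them as independent tensor factors; the split property would be the rigorous route. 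I expect this step to be the main obstacle. For fermionic modes, the field operators anticommute at spacelike separation, so the naive tensor factorization holds only for the even (bilinear) observables. I would handle this by working with the even subalgebra, or with a Jordan--Wigner-type identification of the two-mode sector, and by noting that the Pauli generators are bilinears and hence commute across regions.
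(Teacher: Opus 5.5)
Your proposal is correct and follows the paper's route. The paper's proof likewise takes the doublet bimodule and its Pauli action from the AT structure, then obtains \(\mathcal{H}_{AB}\cong\C^2\otimes\C^2\) with independent \(\sigma_{\hat n}\otimes\mathbf{I}\) and \(\mathbf{I}\otimes\sigma_{\hat n}\) actions simply by invoking the assumed tensor-product composition and Haag--Kastler microcausality. Your extra checks on the pseudoreal dual factor and on fermionic anticommutation (resolved through even bilinear generators) are points the paper passes over. One small slip: your embedding \(i\mapsto -i\sigma_z,\ j\mapsto -i\sigma_y,\ k\mapsto -i\sigma_x\) sends \(ij\) to \(+i\sigma_x\), which is minus the image of \(k\), so it is an anti-homomorphism; take the order \(\sigma_x,\sigma_y,\sigma_z\) instead.
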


\begin{proof}
The product spectral triple \((\mathfrak{A}_F \otimes C^{\infty}(\mathcal{M}),\; \mathcal{H}_F \otimes L^2(\mathcal{M}, S),\; D)\) supports spatially localized mode excitations: for disjoint regions \(A, B \subset \mathcal{M}\), the internal doublet degree of freedom can be instantiated independently in each region, and the joint mode sector composes by tensor product, giving \(\mathcal{H}_{AB} \cong \C^2 \otimes \C^2\).  Kinematic commutativity \([X \otimes \mathbf{I},\, \mathbf{I} \otimes Y] = 0\) is the standard tensor-factor property.  Local commutativity of observable algebras at spacelike separation, \([\mathcal{A}(A),\, \mathcal{A}(B)] = 0\), is the standard Haag--Kastler microcausality condition of any relativistic local QFT built on the emergent Lorentzian manifold \cite{HaagKastler1964, Haag1992}, and we take it as given here.  The quaternionic \(\mathfrak{su}(2)\) generators act independently on each factor because each localized mode carries its own copy of the represented \(\Hbb\) action \cite{AT}.
\end{proof}

\begin{theorem}[CHSH Violation]
\label{thm:correlation_violation}
In the bipartite sector \(\mathcal{H}_{AB}\) of Lemma~\ref{lem:entangled_subspaces}, there exist local \(\pm 1\)-valued observables and a state such that, with probabilities computed via the Born rule (Theorem~\ref{thm:born_rule}), the CHSH expectation satisfies
\begin{equation}
\label{eq:chsh}
|\langle \mathsf{S} \rangle| = 2\sqrt{2}\,,
\end{equation}
violating the local-realist bound of \(2\) and saturating the Tsirelson bound \cite{Tsirelson1980}.
\end{theorem}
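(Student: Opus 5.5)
The plan is to reduce the statement to the standard two-qubit CHSH computation on the composite sector \(\mathcal{H}_{AB} \cong \C^2 \otimes \C^2\) supplied by Lemma~\ref{lem:entangled_subspaces}, and to compute the correlators with the Born rule of Theorem~\ref{thm:born_rule}. First, I fix the local observables. By Lemma~\ref{lem:entangled_subspaces} the quaternionic generators act as \(\sigma_{\hat{n}} \otimes \mathbf{I}\) and \(\mathbf{I} \otimes \sigma_{\hat{n}}\). For any unit vector \(\hat{n}\), \(\sigma_{\hat{n}} = \hat{n}\cdot\vec{\sigma}\) is self-adjoint with spectrum \(\{\pm 1\}\), so it is a legitimate \(\pm 1\)-valued observable whose spectral projections \((\mathbf{I} \pm \sigma_{\hat{n}})/2\) lie in \(\mathcal{P}(\mathcal{H}_{AB})\). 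I choose the coplanar (x--z plane) directions \(A_0 = \sigma_z\), \(A_1 = \sigma_x\) for region \(A\), and \(B_0 = (\sigma_z + \sigma_x)/\sqrt{2}\), \(B_1 = (\sigma_z - \sigma_x)/\sqrt{2}\) for region \(B\). Microcausality, taken in the lemma, guarantees that \([A_i \otimes \mathbf{I}, \mathbf{I} \otimes B_j] = 0\), so the joint outcome probabilities are well defined.

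Next, I choose the state \(\rho = |\Phi^+\rangle\langle\Phi^+|\) with \(|\Phi^+\rangle = (|00\rangle + |11\rangle)/\sqrt{2}\). This is a density operator on \(\mathcal{H}_{AB}\), viewed as a subspace of the full internal\(\otimes\)spatial Hilbert space, so the Born rule applies. I define \(\langle A_i B_j\rangle = \sum_{a,b=\pm1} ab\,\Tr(\rho\, \Pi^A_{i,a}\otimes\Pi^B_{j,b}) = \Tr(\rho\, A_i\otimes B_j)\). For \(\Phi^+\) with x--z-plane directions, a direct computation gives \(\langle \sigma_{\hat a}\otimes\sigma_{\hat b}\rangle = \hat a\cdot\hat b\). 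All four correlators then equal \(\pm 1/\sqrt{2}\) with signs arranged so that \(\langle \mathsf{S}\rangle = \langle A_0B_0\rangle + \langle A_0B_1\rangle + \langle A_1B_0\rangle - \langle A_1B_1\rangle = 4/\sqrt{2} = 2\sqrt{2}\).

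Finally, I check that this value saturates rather than exceeds the quantum maximum. Using \(A_i^2 = B_j^2 = \mathbf{I}\), we have \(\mathsf{S}^2 = 4\mathbf{I} - [A_0,A_1]\otimes[B_0,B_1]\), so \(\|\mathsf{S}\| \le 2\sqrt{2}\), which recovers Tsirelson's bound \cite{Tsirelson1980}. The local-realist bound \(|\langle\mathsf{S}\rangle| \le 2\) follows from the usual deterministic assignment \(a_i, b_j \in \{\pm1\}\), which gives \(a_0(b_0+b_1) + a_1(b_0-b_1) = \pm 2\). The inequality \(2\sqrt{2} > 2\) therefore exhibits the violation.

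The computations above are routine. The main obstacle is the \emph{legitimacy} of the setup: the entangled state and the tensor-product/microcausality structure are not derived from the AT algebra but imported through Lemma~\ref{lem:entangled_subspaces}. Gleason's theorem was also stated for \(\mathcal{H}_F\), not for the composite mode space. I would handle the second point by noting that Theorem~\ref{thm:born_rule}'s argument applies verbatim to any complex Hilbert space of dimension \(\ge 3\), including \(\mathcal{H}_{AB}\) (dimension 4). The first point I would state explicitly as resting on the lemma's assumptions rather than try to prove it.
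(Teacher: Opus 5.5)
Your proposal is correct and follows essentially the same route as the paper: a maximally entangled two-qubit state on \(\mathcal{H}_{AB}\), coplanar Pauli observables at the standard optimal settings (three pairs at relative angle \(\pi/4\), one at \(3\pi/4\)), and a Born-rule trace computation giving \(|\langle\mathsf{S}\rangle| = 2\sqrt{2}\). The differences are minor: the paper uses the singlet, so its correlators are \(-\hat{n}_A\cdot\hat{n}_B\) and it obtains \(-2\sqrt{2}\), where you use \(|\Phi^+\rangle\). The paper also only cites Tsirelson for saturation, whereas you prove the bound via \(\mathsf{S}^2 = 4\mathbf{I} - [A_0,A_1]\otimes[B_0,B_1]\), check the local-realist bound explicitly, and justify using Gleason directly on the 4-dimensional \(\mathcal{H}_{AB}\), a point the paper leaves implicit.
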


\begin{proof}
On each \(\C^2\) factor, define \(\sigma_{\hat{n}} = \hat{n} \cdot \vec{\sigma}\) for a unit vector \(\hat{n} \in \R^3\).  These have eigenvalues \(\pm 1\).  The singlet state is
\begin{equation}
|\Psi_s\rangle = \tfrac{1}{\sqrt{2}}\bigl(|{+}\rangle \otimes |{-}\rangle - |{-}\rangle \otimes |{+}\rangle\bigr).
\end{equation}
The total-spin-zero property \((\sigma_a \otimes \mathbf{I} + \mathbf{I} \otimes \sigma_a)|\Psi_s\rangle = 0\) for \(a = x, y, z\) implies \(\langle \Psi_s | \sigma_a \otimes \sigma_b | \Psi_s \rangle = -\delta_{ab}\).  By linearity, the correlation function is
\begin{equation}
C(\hat{n}_A, \hat{n}_B) \coloneqq \langle \Psi_s | \sigma_{\hat{n}_A} \otimes \sigma_{\hat{n}_B} | \Psi_s \rangle = -\hat{n}_A \cdot \hat{n}_B.
\end{equation}
Define the CHSH operator \(\mathsf{S} = A \otimes B - A \otimes B' + A' \otimes B + A' \otimes B'\) with \(A = \sigma_{\hat{n}_A}\), \(A' = \sigma_{\hat{n}_{A'}}\), \(B = \sigma_{\hat{n}_B}\), \(B' = \sigma_{\hat{n}_{B'}}\).  Choose settings in a common plane with relative angles \(\theta_{AB} = \theta_{A'B} = \theta_{A'B'} = \pi/4\) and \(\theta_{AB'} = 3\pi/4\).  Then
\begin{equation}
\langle \mathsf{S} \rangle = -\cos(\pi/4) + \cos(3\pi/4) - \cos(\pi/4) - \cos(\pi/4) = -2\sqrt{2},
\end{equation}
so \(|\langle \mathsf{S} \rangle| = 2\sqrt{2}\).
\end{proof}

\begin{lemma}[No-Signaling]
\label{lem:no_signalling}
In the bipartite mode decomposition \(\mathcal{H}_{AB} \cong \mathcal{H}_A \otimes \mathcal{H}_B\), any completely positive trace-preserving (CPTP) map \(\Phi_A\) applied locally to mode \(A\) leaves the reduced state on \(B\) unchanged.
\end{lemma}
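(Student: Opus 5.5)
The plan is the standard partial-trace argument on the finite-dimensional factorization \(\mathcal{H}_{AB} \cong \mathcal{H}_A \otimes \mathcal{H}_B\) of Lemma~\ref{lem:entangled_subspaces}. First, I will fix what ``applied locally to mode \(A\)'' means: the joint state \(\rho_{AB}\) is sent to \((\Phi_A \otimes \mathrm{id}_B)(\rho_{AB})\), where \(\Phi_A\) is CPTP on \(\mathcal{B}(\mathcal{H}_A)\). The reduced state on \(B\) is \(\rho_B = \Tr_A(\rho_{AB})\), characterized by \(\Tr(\rho_B Y) = \Tr\bigl(\rho_{AB}(\mathbf{I}\otimes Y)\bigr)\) for all \(Y \in \mathcal{B}(\mathcal{H}_B)\). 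By the Born rule (Theorem~\ref{thm:born_rule}), these traces are exactly the observable statistics at \(B\). The claim therefore reduces to the identity
\[
\Tr_A\bigl((\Phi_A\otimes \mathrm{id}_B)(\rho_{AB})\bigr) = \Tr_A(\rho_{AB}).
\]

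Second, I will prove this identity by linearity. Both sides are linear in \(\rho_{AB}\), and product operators \(X \otimes Y\) span \(\mathcal{B}(\mathcal{H}_A \otimes \mathcal{H}_B)\) because the space is finite-dimensional, so it suffices to check product operators. For these,
\[
\Tr_A\bigl(\Phi_A(X)\otimes Y\bigr) = \Tr\bigl(\Phi_A(X)\bigr)\, Y = \Tr(X)\, Y = \Tr_A(X\otimes Y),
\]
using trace preservation of \(\Phi_A\). As an alternative I may write a Kraus decomposition \(\Phi_A(\cdot) = \sum_k K_k(\cdot)K_k^\dagger\) with \(\sum_k K_k^\dagger K_k = \mathbf{I}_A\). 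Then cyclicity of the partial trace over \(A\) gives \(\Tr_A\bigl(\sum_k (K_k\otimes \mathbf{I})\rho_{AB}(K_k^\dagger \otimes \mathbf{I})\bigr) = \Tr_A\bigl((\sum_k K_k^\dagger K_k\otimes \mathbf{I})\rho_{AB}\bigr) = \rho_B\).

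Third, I will note that complete positivity is needed only so that \(\Phi_A \otimes \mathrm{id}_B\) maps states to states. The conclusion itself uses only linearity and trace preservation. I will also remark that selective local operations, such as a L\"uders update at \(A\) conditioned on an outcome, can change the conditional state at \(B\). Averaging over outcomes yields a CPTP map, so \(B\)'s unconditional statistics are unchanged, and the CHSH correlations of Theorem~\ref{thm:correlation_violation} cannot be used to signal.

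The main obstacle is not the computation but the justification that the local action factorizes as \(\Phi_A \otimes \mathrm{id}_B\) on a genuine tensor product. That structure is supplied by the tensor-product composition and Haag--Kastler microcausality assumed in Lemma~\ref{lem:entangled_subspaces}. I will invoke that lemma directly rather than rederive it.
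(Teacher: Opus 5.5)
Your proof is correct, and your Kraus-operator alternative is exactly the paper's argument. The paper writes \(\Phi_A\otimes\mathrm{id}_B\) with Kraus operators \(K_\alpha\otimes\mathbf{I}\), applies the tensor-factor identity \(\Tr_A\bigl((M_A\otimes\mathbf{I})\rho(N_A\otimes\mathbf{I})\bigr) = \Tr_A\bigl((N_AM_A\otimes\mathbf{I})\rho\bigr)\), and finishes with \(\sum_\alpha K_\alpha^\dagger K_\alpha = \mathbf{I}_A\).

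Your primary route is different. You check the identity on product operators \(X\otimes Y\), where it reduces to \(\Tr\Phi_A(X) = \Tr X\), and then extend by linearity. This is shorter and more general:
\begin{itemize}
\item it needs no Kraus representation;
\item it shows that only linearity and trace preservation enter, with complete positivity serving only to guarantee that the output is again a state;
\item it avoids the partial-trace ``cyclicity'' identity, which the paper must justify separately by expanding \(\rho\) in a product basis. That justification is essentially your argument in another form.
\end{itemize}
In exchange, the Kraus route writes the local operation explicitly as \(K_\alpha\otimes\mathbf{I}\), matching the physical picture of an operation acting only on mode \(A\), and it shows where the completeness relation enters.

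Your closing remark is a useful addition that the paper's proof leaves implicit. A selective, outcome-conditioned L\"uders update at \(A\) can change the conditional state at \(B\), but the outcome-averaged CPTP map cannot change \(B\)'s marginal. This is what connects the lemma to the CHSH correlations of Theorem~\ref{thm:correlation_violation}.
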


\begin{proof}
Let \(\Phi_A(X) = \sum_\alpha K_\alpha X K_\alpha^\dagger\) with \(\sum_\alpha K_\alpha^\dagger K_\alpha = \mathbf{I}_A\); then \(\Phi_A \otimes \mathrm{id}_B\) has Kraus operators \(K_\alpha \otimes \mathbf{I}\).  For any state \(\rho\) on \(\mathcal{H}_{AB}\),
\begin{align}
\Tr_A\bigl((\Phi_A \otimes \mathrm{id}_B)(\rho)\bigr)
&= \sum_\alpha \Tr_A\bigl((K_\alpha \otimes \mathbf{I})\,\rho\,(K_\alpha^\dagger \otimes \mathbf{I})\bigr) \nonumber\\
&= \sum_\alpha \Tr_A\bigl((K_\alpha^\dagger K_\alpha \otimes \mathbf{I})\,\rho\bigr) \nonumber\\
&= \Tr_A(\rho),
\end{align}
where the second equality uses the identity \(\Tr_A\bigl((M_A \otimes \mathbf{I})\,\rho\,(N_A \otimes \mathbf{I})\bigr) = \Tr_A\bigl((N_A M_A \otimes \mathbf{I})\,\rho\bigr)\) for any \(M_A, N_A \in \mathcal{B}(\mathcal{H}_A)\), which follows by expanding \(\rho\) in a product basis and applying cyclicity of the \(A\)-trace in the \(A\)-factor (partial traces are not cyclic in general, but this tensor-factor identity holds because the \(\mathbf{I}_B\) factors commute through the partial trace over \(A\)).  The final equality then uses \(\sum_\alpha K_\alpha^\dagger K_\alpha = \mathbf{I}_A\).  Therefore the marginal statistics in region \(B\) are independent of any local operation in region \(A\).
\end{proof}

\begin{remark}[Origin of nonseparability]
\label{rem:prespatial_nonlocality}
The Bell violation does not arise from superluminal causal influence but from the nonseparability of the global state \(|\Psi\rangle\) relative to the emergent subsystem decomposition.  At the algebraic level, the spectral triple \((\mathfrak{A}, \mathcal{H}, D)\) defines a single global Hilbert space without reference to spatial localization.  The manifold \(\mathcal{M}\) emerges from the spectral data of \(D\) \cite{AT}, and the spectral action generates local dynamical equations, but the resulting states \(|\Psi\rangle \in \mathcal{H}\) need not be separable with respect to the emergent spatial subsystem decomposition.  The correlations reflect the nonseparability of a single global state when decomposed into the spatial subsystem structure of the emergent Continuum.
\end{remark}

\paragraph{Classification.}
Accessibility Theory describes a world in which the ontological state \(|\Psi\rangle\) exists as a definite element of the Hilbert space and evolves deterministically under unitary evolution at all times.  What the embedded observer encounters is not this state but its image through the Aperture---a commutative three-outcome interface that cannot resolve the non-abelian content of the bulk algebra.  Every distinctive feature of the quantum formalism developed above traces to that single bottleneck.  The Born rule is the unique coherent probability assignment compatible with the projection lattice behind the interface.  Interference is the persistence in the observer's inferences of amplitude coherences that the sector record cannot individually resolve.  Non-Markovian memory is hidden intra-sector structure leaking into future transition statistics through unitary evolution.  Bell-violating correlations arise from the nonseparability of the global state when that state is decomposed into emergent spatial modes; no-signaling is simply the tensor-product locality of that decomposition.

In ontological-model terms \cite{HarriganSpekkens2010}, AT is realist and deterministic at the level of the underlying global state, while the operational quantum state---the density operator \(\rho\)---functions epistemically for the embedded observer.  The inaccessible information is not carried by Bell-local hidden variables in the sense of a pre-assigned outcome map \(\lambda \in \Lambda\); it resides in the global algebraic degrees of freedom that cannot be recovered from the Aperture record, a gap whose magnitude is fixed by the resolution ratio \(\mathcal{R}_{\mathcal{C}} = 6 > 1\).  Accessibility Theory is thus realist and deterministic at the ontological level, non-local in the sense of Bell-inequality violation, and epistemic in its use of the quantum state---with the epistemic limitation grounded in a concrete, quantifiable algebraic bottleneck rather than in any subjective or operational convention.
\section[Discussion and Conclusion]{Discussion and Conclusion}
\label{sec:discussion}

\subsection[Non-abelian Gauge Content]{Non-abelian Gauge Content}

One structural feature worth noting is how AT treats non-abelian gauge content.  Because the boundary algebra \(\mathfrak{A}_{F,b}\) is the center of the complex envelope of \(\mathfrak{A}_F\), the primitive context-free record resolves only the central sector label---not within-sector directions of \(M_2(\C)\) or \(M_3(\C)\).  Non-abelian bulk dynamics can nevertheless influence observed record statistics through the transition effect operator \(E_\beta^{(\alpha)} = P_\alpha U^\dagger P_\beta U P_\alpha\) (Appendix~\ref{app:nonmarkov}), whose dependence on the full unitary \(U\) encodes non-abelian couplings in the bulk Hamiltonian.  This structure is compatible with the standard gauge-theoretic principle that only gauge-invariant quantities are directly observable \cite{Weinberg1995, Strocchi2013}, but AT's primitive cut is algebraically sharper than gauge invariance: the center \(\C \oplus \C \oplus \C\) is a three-dimensional commutative algebra, whereas the full gauge-invariant operator algebra of a non-abelian theory is much richer.

\subsection[The Measurement Problem]{The Measurement Problem}

The standard quantum measurement problem arises from the apparent tension between unitary Schr\"odinger evolution and a non-unitary collapse postulate on measurement.  Within AT, this tension is resolved by distinguishing the ontological state \(|\Psi\rangle\), which evolves unitarily at all times, from the epistemic density operator \(\rho\), which updates by L\"uders conditioning upon record of a sector outcome.  What appears to the embedded observer as collapse is this L\"uders update (Theorem~\ref{thm:born_luders})---a change in the observer's inferential description, not a physical event on the ontological state.  The ontological and epistemic layers thus have different update rules without inconsistency, because they describe different objects.

\subsection[Emergence of the Classical Layer]{Emergence of the Classical Layer}

In decoherence-based accounts of quantum foundations, the classical character of macroscopic observables emerges dynamically from environmental entanglement: a system's reduced density matrix becomes approximately diagonal in a preferred basis through interaction with the environment \cite{Zurek2003, Schlosshauer2005}.  Within AT, the boundary algebra \(\mathfrak{A}_{F,b} = \C \oplus \C \oplus \C\) is literally commutative---an algebraic consequence of the boundary reduction, not a dynamical approximation.  This provides a structural source of classicalization at the boundary interface itself, complementary to decoherence processes that remain relevant for the emergence of classical-looking behavior within the bulk.

\subsection[Bulk-to-boundary State Map]{Bulk-to-boundary State Map}

The bulk-to-boundary relationship is structurally established through the boundary algebra \(\mathfrak{A}_{F,b}\) and its dimension \(H_b = 12\), but the explicit state-space map from the bulk ontological state to the boundary epistemic state has not yet been constructed.  The trace reduction ratio \(\xi = 1/4\) quantifies the dimensional relationship; formalizing the corresponding channel on states is a natural direction for future work.

\subsection[Concluding Remarks]{Concluding Remarks}

Within the Accessibility Theory framework, the Standard Model, General Relativity, and the operational quantum formalism trace back to a single balance principle rather than being introduced as independent foundational postulates.  The Aperture construction provides the structural setting in which the characteristic features of the operational quantum formalism---probabilistic outcomes, interference, non-Markovian dynamics, and non-local correlations---arise for an embedded observer whose access to a deterministic algebraic reality is mediated through a permanent information bottleneck.

\newpage
\appendix

\section{Foundational Proofs and Derivations}
\subsection[Structural Derivation of the Born Rule]{Structural Derivation of the Born Rule}
\label{app:born_rule}

This appendix derives the Born rule, the Lüders update rule, and the unitary evolution of epistemic states as the unique inferential framework compatible with the Aperture interface and the inferential inputs listed below. The derivation builds on the operator-algebraic structure already established in Accessibility Theory together with the following inferential inputs:

\begin{enumerate}
    \item[(i)] coherent numerical valuation of Aperture outcomes (linearity, positivity, and normalization on the space of gambles over the record algebra),
    \item[(ii)] a normalized finitely additive noncontextual extension of the resulting probability assignment to the full projection lattice \(\mathcal{P}(\mathcal{H}_F)\),
    \item[(iii)] finite-dimensionality of the internal Hilbert space, with \(\dim_{\C}\mathcal{H}_F = 48 \geq 3\),
    \item[(iv)] ordinary Bayesian conditioning on recorded boundary events,
    \item[(v)] unitary ontological evolution of the global state,
    \item[(vi)] finite-dimensional trace duality.
\end{enumerate}

Throughout, let \(\mathcal{H}\) denote the finite-dimensional complex Hilbert space of the internal sector under consideration, with \(\dim_{\C}(\mathcal{H}) = n\). In AT, \(n = 48\) for the full internal Hilbert space \(\mathcal{H}_F\). Let \(\mathcal{B}(\mathcal{H}) \cong M_n(\C)\) and let \(\mathcal{P}(\mathcal{H})\) denote its projection lattice.

\subsubsection[Probability on the Aperture Record Algebra]{Probability on the Aperture Record Algebra}

The Aperture interface (Section~\ref{sec:aperture}) provides the observer with three mutually exclusive primitive outcomes, labeled by the minimal central projections \(\{P_1, P_2, P_3\}\) of the boundary algebra \(\mathfrak{A}_{F,b} \cong \C \oplus \C \oplus \C\).  The record algebra \(\mathcal{R}_{\text{Ap}}\) is the finite Boolean algebra generated by these projections:
\begin{equation}
\mathcal{R}_{\text{Ap}} = \bigl\{0,\; P_1,\; P_2,\; P_3,\; P_1 + P_2,\; P_1 + P_3,\; P_2 + P_3,\; \mathbf{I}\bigr\}.
\end{equation}
An observer reasoning inferentially about Aperture outcomes may assign numerical valuations to gambles over these outcomes.  A \textbf{gamble} is a real-valued function \(f : \{1, 2, 3\} \to \R\) specifying a payoff \(f(\alpha)\) for each primitive outcome.  The space of gambles is the three-dimensional real vector space \(\mathcal{F} = \R^{\{1, 2, 3\}}\), with basis \(\{e_1, e_2, e_3\}\) defined by \(e_\alpha(\beta) = \delta_{\alpha\beta}\), and constant gamble \(\mathbf{1}(\alpha) = 1\) for all \(\alpha\).

\begin{lemma}[Valuation Representation on the Aperture Record Algebra]
\label{lem:aperture_valuation}
Let \(V : \mathcal{F} \to \R\) be a valuation functional satisfying:
\begin{enumerate}
    \item[(V1)] \emph{Linearity:} \(V(af + bg) = a\,V(f) + b\,V(g)\) for all \(f, g \in \mathcal{F}\) and \(a, b \in \R\),
    \item[(V2)] \emph{Positivity:} if \(f(\alpha) \geq 0\) for all \(\alpha\), then \(V(f) \geq 0\),
    \item[(V3)] \emph{Normalization:} \(V(\mathbf{1}) = 1\).
\end{enumerate}
Then there exists a unique probability vector \((p_1, p_2, p_3)\) with \(p_\alpha \geq 0\) and \(\sum_{\alpha} p_\alpha = 1\) such that
\begin{equation}
V(f) = \sum_{\alpha = 1}^{3} p_\alpha\, f(\alpha) \qquad \text{for all } f \in \mathcal{F}.
\end{equation}
Consequently, the induced assignment \(p : \mathcal{R}_{\text{Ap}} \to [0, 1]\), defined on record events \(E = \sum_{\alpha \in S} P_\alpha\) by \(p(E) = \sum_{\alpha \in S} p_\alpha\), is a finitely additive probability measure on \(\mathcal{R}_{\text{Ap}}\).
\end{lemma}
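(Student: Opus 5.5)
The plan is to use the fact that \(\mathcal{F} = \R^{\{1,2,3\}}\) is three-dimensional with basis \(\{e_1, e_2, e_3\}\), so a linear functional is fixed by its values on that basis. Define \(p_\alpha \coloneqq V(e_\alpha)\). Every gamble expands as \(f = \sum_\alpha f(\alpha)\, e_\alpha\), and (V1) then gives the representation
\begin{equation}
V(f) = \sum_{\alpha=1}^{3} p_\alpha\, f(\alpha).
\end{equation}
Uniqueness is immediate. If some vector \((q_\alpha)\) also represents \(V\), evaluating the representation at \(f = e_\beta\) gives \(q_\beta = V(e_\beta) = p_\beta\).

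The next step is to check that \((p_\alpha)\) is a probability vector. Each \(e_\alpha\) is pointwise nonnegative, so (V2) yields \(p_\alpha \geq 0\). Since \(\mathbf{1} = e_1 + e_2 + e_3\), linearity and (V3) give \(\sum_\alpha p_\alpha = V(\mathbf{1}) = 1\). Together these force each \(p_\alpha \in [0,1]\).

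For the record-algebra consequence, I would use the identification \(E = \sum_{\alpha \in S} P_\alpha \leftrightarrow S\). The three \(P_\alpha\) are mutually orthogonal nonzero projections, so the subset \(S\) is uniquely recovered from \(E\) (for example as \(S = \{\alpha : P_\alpha E = P_\alpha\}\)). Hence \(p(E) = \sum_{\alpha \in S} p_\alpha\) is well defined; equivalently, \(p(E) = V(\chi_S)\) with \(\chi_S\) the indicator gamble of \(S\).

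The measure properties then follow directly:
\begin{itemize}
\item Nonnegativity holds because each \(p_\alpha \geq 0\).
\item Normalization holds because \(p(0) = 0\) (empty \(S\)) and \(p(\mathbf{I}) = \sum_\alpha p_\alpha = 1\).
\item Finite additivity uses the fact that two events \(E, E'\) in \(\mathcal{R}_{\text{Ap}}\) are orthogonal (\(EE' = 0\)) exactly when their index sets \(S, S'\) are disjoint, in which case \(E + E'\) corresponds to \(S \cup S'\). The sum \(\sum_{S \cup S'} p_\alpha\) splits as \(\sum_S p_\alpha + \sum_{S'} p_\alpha\).
\item Values lie in \([0,1]\) because any \(S\) gives \(0 \leq \sum_{S} p_\alpha \leq \sum_{\alpha} p_\alpha = 1\).
\end{itemize}

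There is no genuine obstacle here; this is the finite-dimensional Riesz representation for \(\R^3\). The only point needing a line of care is the well-definedness of \(p\) on \(\mathcal{R}_{\text{Ap}}\), namely that the projection-to-subset correspondence is a Boolean isomorphism. That follows from \(P_\alpha P_\beta = \delta_{\alpha\beta} P_\alpha\) and \(P_\alpha \neq 0\), which are established for the minimal central projections in Theorem~\ref{thm:central}.
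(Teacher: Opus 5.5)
Your proposal is correct and follows the paper's proof essentially step for step: you define \(p_\alpha = V(e_\alpha)\), expand via linearity, obtain nonnegativity from positivity and \(\sum_\alpha p_\alpha = 1\) from \(\mathbf{1} = e_1 + e_2 + e_3\), get uniqueness by evaluating on the basis, and prove finite additivity by splitting sums over disjoint index sets. Your extra check that \(E \mapsto S\) is well defined is a point of care the paper leaves implicit. It relies on \(P_\alpha P_\beta = \delta_{\alpha\beta} P_\alpha\) and \(P_\alpha \neq 0\), and the nonvanishing really comes from the nontrivial sector decomposition \(\mathcal{H}_F = \mathcal{H}_1 \oplus \mathcal{H}_2 \oplus \mathcal{H}_3\) rather than from Theorem~\ref{thm:central} itself.
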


\begin{proof}
Every gamble \(f \in \mathcal{F}\) decomposes uniquely as \(f = \sum_{\alpha = 1}^{3} f(\alpha)\, e_\alpha\).  By (V1),
\begin{equation}
V(f) = \sum_{\alpha = 1}^{3} f(\alpha)\, V(e_\alpha).
\end{equation}
Define \(p_\alpha \coloneqq V(e_\alpha)\).  Since \(e_\alpha(\beta) \geq 0\) for all \(\beta\), (V2) gives \(p_\alpha \geq 0\).  From \(\mathbf{1} = e_1 + e_2 + e_3\) and (V1), (V3),
\begin{equation}
1 = V(\mathbf{1}) = V(e_1) + V(e_2) + V(e_3) = p_1 + p_2 + p_3.
\end{equation}
Thus \((p_1, p_2, p_3)\) is a probability vector and \(V(f) = \sum_\alpha p_\alpha\, f(\alpha)\) as claimed.  Uniqueness follows from the uniqueness of the basis decomposition: any other probability vector representing \(V\) would have to satisfy \(p'_\alpha = V(e_\alpha) = p_\alpha\).

For the induced measure, let \(E = \sum_{\alpha \in S} P_\alpha\) and \(E' = \sum_{\alpha \in S'} P_\alpha\) be record events with \(S \cap S' = \varnothing\).  Then \(E + E' = \sum_{\alpha \in S \cup S'} P_\alpha\), and
\begin{equation}
p(E + E') = \sum_{\alpha \in S \cup S'} p_\alpha = \sum_{\alpha \in S} p_\alpha + \sum_{\alpha \in S'} p_\alpha = p(E) + p(E').
\end{equation}
Normalization \(p(\mathbf{I}) = p_1 + p_2 + p_3 = 1\) is immediate.
\end{proof}

\begin{remark}[Operational content of the valuation axioms]
\label{rem:valuation_operational}
The three axioms (V1)--(V3) express minimal coherence requirements on numerical valuations of outcome gambles.  Linearity expresses additive aggregation: the valuation of a combined gamble is the sum of the valuations of its parts, and scaling a gamble by a constant scales its valuation by the same constant.  Positivity expresses monotonicity under sure dominance: a gamble that never pays negatively cannot be valued negatively.  Normalization fixes the unit of valuation by assigning the value \(1\) to the constant unit gamble.  These axioms do not assume probability; they derive it.  The Lemma shows that any valuation functional obeying them is necessarily an expectation with respect to a unique probability vector over the three Aperture outcomes.
\end{remark}

Lemma~\ref{lem:aperture_valuation} forces a probability assignment on the Aperture record algebra \(\mathcal{R}_{\text{Ap}}\) once the observer adopts coherent valuation.  The remainder of the appendix considers a normalized finitely additive noncontextual extension of this assignment to the full projection lattice \(\mathcal{P}(\mathcal{H}_F)\) and applies Gleason's theorem to derive the Born form.

\subsubsection[Gleason-Type Reconstruction on the AT Hilbert Space]{Gleason-Type Reconstruction on the AT Hilbert Space}

\begin{theorem}[Gleason's theorem, finite-dimensional complex form {\cite{Gleason1957}}]
\label{thm:gleason_external}
Let \(\mathcal{H}\) be a complex Hilbert space with \(\dim_{\C}\mathcal{H} \geq 3\). Let
\[
\mu: \mathcal{P}(\mathcal{H}) \to [0,1]
\]
satisfy:
\begin{enumerate}
    \item[(a)] \(\mu(\mathbf{I}) = 1\),
    \item[(b)] for every finite family of pairwise orthogonal projections \(\{P_k\}\) with \(\sum_k P_k = \mathbf{I}\),
    \[
    \mu\!\left(\sum_k P_k\right) = \sum_k \mu(P_k).
    \]
\end{enumerate}
Then there exists a unique positive operator \(\rho \in \mathcal{B}(\mathcal{H})\) with \(\Tr(\rho) = 1\) such that
\[
\mu(P) = \Tr(\rho\,P)
\qquad\text{for all } P \in \mathcal{P}(\mathcal{H}).
\]
\end{theorem}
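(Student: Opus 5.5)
The plan is to reduce the statement to the classical \emph{frame-function} problem and then follow Gleason's original strategy, using the Cooke--Keane--Moran elementary route for the analytic core.

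\textbf{Step 1: reduce to rank-one projections.} For a unit vector \(x \in \mathcal{H}\), set \(f(x) \coloneqq \mu(|x\rangle\langle x|)\). Applying (b) to the rank-one projections of any orthonormal basis shows \(\sum_k f(e_k) = 1\), so \(f\) is a nonnegative frame function of weight \(1\). Next, take an arbitrary \(P \in \mathcal{P}(\mathcal{H})\), an orthonormal basis \(\{u_i\}\) of \(P\mathcal{H}\), and an orthonormal basis \(\{v_j\}\) of \((\mathbf{I}-P)\mathcal{H}\). Applying (b) to the two resolutions \(\{P\} \cup \{|v_j\rangle\langle v_j|\}\) and \(\{|u_i\rangle\langle u_i|\} \cup \{|v_j\rangle\langle v_j|\}\) gives \(\mu(P) = \sum_i f(u_i)\). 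So \(\mu\) is determined by \(f\). It therefore suffices to show that \(f(x) = \langle x, \rho\, x\rangle\) for a positive \(\rho\) with \(\Tr\rho = 1\); then \(\mu(P) = \sum_i \langle u_i, \rho\, u_i\rangle = \Tr(\rho P)\).

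\textbf{Step 2: the real three-dimensional case (main obstacle).} I would first prove that every nonnegative frame function on the unit sphere \(S^2 \subset \R^3\) is a quadratic form restricted to the sphere. The hard part is \emph{continuity}; I expect this to be the main obstacle. I would use the elementary geometric argument:
\begin{itemize}
\item \emph{Great-circle lemma.} If \(x\) and \(y\) are orthogonal, \(f(x) + f(y) = 1 - f(x \times y)\) is constant along the great circle through \(x\) and \(y\).
\item \emph{Oscillation control.} Combining this with nonnegativity and a chain of great circles connecting points near a chosen point, one shows that if \(f\) attains (or nearly attains) its infimum at the north pole, then \(f(z) - f(\text{north pole})\) is bounded by a constant times the oscillation near the pole.
\item \emph{Continuity.} Iterating this bound and using rotational freedom yields continuity everywhere.
\end{itemize}
Once \(f\) is continuous, expand it in spherical harmonics. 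The frame condition is rotation-invariant, so each isotypic component is separately a frame function of weight \(0\), apart from the constant term. A direct calculation with the zonal harmonic shows that only degrees \(0\) and \(2\) are compatible with the frame condition. Hence \(f(x) = \langle x, T x\rangle\) for a real symmetric \(T\).

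\textbf{Step 3: lift to \(\C^n\) with \(n \geq 3\).} Take any two orthonormal vectors \(x, y\) and a third unit vector orthogonal to both (available since \(n \geq 3\)). Restricting \(f\) to the real span of \(x\), \(\cos\theta\, y\), \(\sin\theta\, i y\), and similar real three-dimensional slices, Step 2 shows that \(f\) is quadratic on each such slice. These slice quadratics are compatible: their restrictions to the unit sphere of each complex two-plane assemble into a Hermitian form there, using the phase invariance \(f(e^{i\phi}x) = f(x)\). Polarization then defines a sesquilinear form \(\langle x, \rho\, y\rangle\) on all of \(\mathcal{H}\); it is consistent because every pair of vectors lies in some complex two-plane contained in a three-dimensional subspace.

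\textbf{Step 4: properties and uniqueness.} Nonnegativity of \(f\) gives \(\rho \geq 0\). The weight-\(1\) frame condition gives \(\Tr\rho = \sum_k \langle e_k, \rho\, e_k\rangle = 1\). Uniqueness holds because rank-one projections span the Hermitian matrices: if \(\Tr(\rho P) = \Tr(\rho' P)\) for all rank-one \(P\), then \(\rho = \rho'\).
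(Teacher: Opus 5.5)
The paper does not prove this statement: it imports it from \cite{Gleason1957} and only checks that $\dim_{\C}\mathcal{H}_F=48\ge 3$. Your outline follows the architecture of Gleason's own proof, so it is the argument the citation stands for. Steps 1 and 4, and the harmonic-analysis half of Step 2, are fine. You are also right to work in finite dimensions: with only finite additivity the infinite-dimensional version is false, because a singular state on $\mathcal{B}(\mathcal{H})$ gives such a $\mu$ vanishing on every rank-one projection.

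Step 3, however, fails as written. For $\theta\notin\tfrac{\pi}{2}\mathbb{Z}$ the slice $\mathrm{span}_{\R}\{x,\cos\theta\,y,\sin\theta\,iy\}$ equals $\mathrm{span}_{\R}\{x,y,iy\}$, which is not totally real. The vectors $y$ and $iy$ are orthogonal for $\mathrm{Re}\langle\cdot,\cdot\rangle$, but $\langle y,iy\rangle\neq 0$. So real orthonormal triples there are not frames, and $f$ restricted to this slice is not a frame function: on $(x,y,iy)$ it sums to $f(x)+2f(y)$. Step 2 therefore cannot be applied.

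Use instead $\mathrm{span}_{\R}\{x,e^{i\theta}y,z\}$ with $(x,y,z)$ complex-orthonormal; this is where your third vector is needed. Its real orthonormal bases are orthonormal bases of $\mathrm{span}_{\C}\{x,y,z\}$. Hence $f$ is there a nonnegative real frame function of weight $1-\sum_j f(v_j)$, where $\{v_j\}$ is an orthonormal basis of the complement.

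Even then, phase invariance only gives $f(rx+se^{i\theta}y)=Ar^2+2B_\theta rs+Cs^2$ with $A=f(x)$, $C=f(y)$. Hermiticity on $\mathrm{span}_{\C}\{x,y\}$ requires $\theta\mapsto B_\theta$ to have the form $a\cos\theta+b\sin\theta$, and that needs one more totally real slice. With $u=(x+y)/\sqrt2$ and $v=(x-y)/\sqrt2$ one has $(x+e^{i\theta}y)/\sqrt2=e^{i\theta/2}\bigl(\cos\tfrac{\theta}{2}\,u-\sin\tfrac{\theta}{2}\,iv\bigr)$, and $f$ is quadratic on $\mathrm{span}_{\R}\{u,iv\}$.

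Finally, the obstacle to a global sesquilinear form is not consistency on pairs. It is additivity $B(x+x',y)=B(x,y)+B(x',y)$ for vectors spanning three dimensions. Get it from the parallelogram law, which involves only two-planes, via the Jordan--von Neumann argument.

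The second gap is that the continuity step, which is the whole difficulty of the theorem, is only announced. As worded, the ``oscillation control'' bullet gives nothing that could be iterated: for $z$ near the pole, $f(z)-f(n)$ is trivially bounded by the oscillation there. The real ingredients are two:
\begin{enumerate}
\item Gleason's descent lemma. If $p,q$ lie in the open northern hemisphere and $q$ is lower than $p$, then $q$ is reached from $p$ by finitely many moves. Each move goes from the current point $p_i$ to a point of the great circle $C_{p_i}$ on which $p_i$ is highest.
\item The identity $f(q)+f(q')=f(p)+f(e_p)$ for orthogonal $q,q'\in C_p$, where $e_p$ is the equatorial point of $C_p$.
\end{enumerate}
Turning these, together with $f\ge 0$, into an actual estimate and then into continuity is exactly the content of Gleason's and Cooke--Keane--Moran's arguments. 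If you leave this step as a reference, you are, like the paper, citing the core of Gleason's theorem rather than proving it.
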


\begin{remark}[Applicability inside AT]
\label{rem:gleason_applicability}
Gleason's theorem is a result of pure mathematics (frame-function analysis on the unit sphere of \(\C^N\)) that makes no physical assumptions. Its hypotheses are satisfied in AT: the internal Hilbert space has \(\dim_{\C}\mathcal{H}_F = 48 \geq 3\) (Finite Hilbert Space Constraint Corollary of \cite{AT}), and a normalized finitely additive noncontextual extension of the Aperture-level measure of Lemma~\ref{lem:aperture_valuation} to \(\mathcal{P}(\mathcal{H}_F)\) supplies exactly the normalized finite additivity on orthogonal projections required by conditions (a)--(b).

For individual sectors \(\mathcal{H}_\alpha = P_\alpha\,\mathcal{H}_F\), the probability assignment is defined on the full projection lattice of \(\mathcal{H}_F\), and Gleason's theorem applied globally---using \(\dim_{\C}\mathcal{H}_F = 48 \geq 3\)---yields a density operator \(\rho\) whose restriction to each sector automatically provides the Born rule form.  The global application therefore suffices; no separate sector-level hypothesis is required.
\end{remark}

\begin{theorem}[Born Rule for Projections]
\label{thm:born_rule_appendix}
Every probability assignment on \(\mathcal{P}(\mathcal{H}_F)\) obtained by coherent valuation of Aperture outcomes (Lemma~\ref{lem:aperture_valuation}) together with a normalized finitely additive noncontextual extension to the full projection lattice is of the form
\begin{equation}
p(P) = \Tr(\rho\,P)
\end{equation}
for a unique density operator \(\rho \geq 0\) with \(\Tr(\rho) = 1\).
\end{theorem}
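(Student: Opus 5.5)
The plan is to reduce Theorem~\ref{thm:born_rule_appendix} to Gleason's theorem (Theorem~\ref{thm:gleason_external}) by checking that the assignment described in the statement meets hypotheses (a)--(b), and then to read off existence and uniqueness of \(\rho\).

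\emph{Step 1.} Start from the Aperture-level data. Lemma~\ref{lem:aperture_valuation} gives a probability vector \((p_1,p_2,p_3)\) and a finitely additive measure on \(\mathcal{R}_{\text{Ap}}\). By hypothesis, this measure is extended to a map \(p:\mathcal{P}(\mathcal{H}_F)\to[0,1]\) that satisfies normalization, finite additivity on orthogonal decompositions of \(\mathbf{I}\), and noncontextuality. The extension is assumed to agree with the Lemma on \(\mathcal{R}_{\text{Ap}}\), so \(p(P_\alpha)=p_\alpha\).

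\emph{Step 2.} Check the hypotheses of Gleason's theorem. Noncontextuality is what makes \(p\) a well-defined function on projections rather than on pairs (projection, decomposition). Given that, (a) is the normalization clause and (b) is the finite additivity clause. The dimension condition holds because \(\dim_{\C}\mathcal{H}_F=48\geq 3\), as recalled in Remark~\ref{rem:gleason_applicability}.

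\emph{Step 3.} Apply Gleason. Theorem~\ref{thm:gleason_external} yields a positive \(\rho\) with \(\Tr\rho=1\) and \(p(P)=\Tr(\rho P)\) for every projection \(P\).

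\emph{Step 4.} Prove uniqueness directly by finite-dimensional trace duality, input (vi). Suppose \(\Tr(\rho P)=\Tr(\rho' P)\) for all projections. Taking rank-one \(P=|v\rangle\langle v|\) gives \(\langle v|(\rho-\rho')|v\rangle=0\) for all \(v\). Since \(\rho-\rho'\) is self-adjoint, polarization gives \(\rho=\rho'\). Equivalently, projections span \(M_{48}(\C)\) and the trace pairing is nondegenerate. As a consistency check, \(\Tr(\rho P_\alpha)=p_\alpha\), so \(\rho\) reproduces the Aperture record probabilities.

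The main obstacle is Step 2. One has to state carefully that the Gleason hypothesis (b), as written, quantifies over orthogonal families summing to \(\mathbf{I}\), and that noncontextuality is exactly what turns \(p\) into a frame function. The rest is direct invocation of Gleason's theorem. The nontrivial analytic content of that theorem, the continuity of frame functions in dimension \(\geq3\), is imported and not reproved here.
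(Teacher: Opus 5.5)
Your proposal is correct and follows the paper's proof essentially verbatim. Both arguments check that the extended assignment satisfies normalization and additivity over orthogonal resolutions of \(\mathbf{I}\), with noncontextuality making \(p\) a genuine function on \(\mathcal{P}(\mathcal{H}_F)\), together with \(\dim_{\C}\mathcal{H}_F = 48 \geq 3\), and then invoke Gleason's theorem; your Step~4 polarization argument only spells out the uniqueness that the paper takes directly from the uniqueness clause of Theorem~\ref{thm:gleason_external}.
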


\begin{proof}
By Lemma~\ref{lem:aperture_valuation}, coherent valuation of Aperture outcomes induces a finitely additive probability measure on the Aperture record algebra \(\mathcal{R}_{\text{Ap}}\).  By the extension hypothesis, this measure extends to an assignment \(p: \mathcal{P}(\mathcal{H}_F) \to [0,1]\) satisfying:
\begin{itemize}
    \item \(p(\mathbf{I}) = 1\) (normalization),
    \item \(p\!\left(\sum_k P_k\right) = \sum_k p(P_k)\) for pairwise orthogonal \(\{P_k\}\) summing to \(\mathbf{I}\) (finite additivity on the full lattice),
    \item \(p(P)\) depends only on \(P\) (noncontextuality),
    \item \(\dim_{\C}(\mathcal{H}_F) = 48 \geq 3\) (Finite Hilbert Space Constraint Corollary of \cite{AT}).
\end{itemize}
These are precisely the hypotheses of Theorem~\ref{thm:gleason_external}. Applying Gleason's theorem yields a unique density operator \(\rho \in \mathcal{B}(\mathcal{H}_F)\) with \(\rho \geq 0\), \(\Tr(\rho) = 1\), and \(p(P) = \Tr(\rho\,P)\) for all \(P \in \mathcal{P}(\mathcal{H}_F)\).
\end{proof}

\begin{corollary}[Canonical extension to effects]
\label{cor:born_effect_extension}
The projection-level probability assignment of Theorem~\ref{thm:born_rule_appendix} admits a unique canonical extension to the effect algebra
\[
\mathcal{E}(\mathcal{H}) \coloneqq \{E \in \mathcal{B}(\mathcal{H}) \mid 0 \leq E \leq \mathbf{I}\},
\]
given by
\begin{equation}
p(E) \coloneqq \Tr(\rho\,E).
\end{equation}
In particular, if \(E, F \in \mathcal{E}(\mathcal{H})\) satisfy \(E + F \leq \mathbf{I}\), then \(p(E + F) = p(E) + p(F)\).
\end{corollary}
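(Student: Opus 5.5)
The plan is to read the extension directly off the density operator $\rho$ supplied by Theorem~\ref{thm:born_rule_appendix}, and then establish three things for the map $E \mapsto \Tr(\rho\,E)$: it is well defined on $\mathcal{E}(\mathcal{H})$, it agrees with $p$ on projections, and it is additive. Uniqueness will be understood in the sense of the stated canonical extension: the unique extension to effects that is affine (finitely additive and convex-linear) and agrees with $p$ on $\mathcal{P}(\mathcal{H})$.

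\emph{Well-definedness and agreement.} For $0 \leq E \leq \mathbf{I}$ we use $\rho \geq 0$ and $\Tr(\rho)=1$. Writing $\Tr(\rho E)=\Tr(\rho^{1/2}E\rho^{1/2})$ gives $\Tr(\rho E) \geq 0$. Writing $\Tr(\rho(\mathbf{I}-E)) \geq 0$ gives $\Tr(\rho E)\leq 1$. So $p(E)\in[0,1]$. Since every projection is an effect, and Theorem~\ref{thm:born_rule_appendix} gives $p(P)=\Tr(\rho P)$, the new map restricts to the original assignment on $\mathcal{P}(\mathcal{H})$.

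\emph{Additivity.} If $E+F\leq \mathbf{I}$, then $E+F$ is again an effect, and linearity of the trace gives $\Tr(\rho(E+F))=\Tr(\rho E)+\Tr(\rho F)$.

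\emph{Uniqueness.} Suppose $q$ is another map on $\mathcal{E}(\mathcal{H})$ that agrees with $p$ on projections and is affine. Spectrally decompose an effect as $E=\sum_j \lambda_j Q_j$, with $\lambda_j\in[0,1]$ and $\{Q_j\}$ orthogonal projections. Finite additivity gives $q(\lambda Q)=\lambda\,q(Q)$ for rational $\lambda$ in $[0,1]$. Monotonicity, which follows from additivity and positivity, then extends this to real $\lambda$. Hence
\[
q(E)=\sum_j \lambda_j\, q(Q_j)=\sum_j\lambda_j\Tr(\rho Q_j)=\Tr(\rho E).
\]
This rests on finite-dimensional trace duality, input (vi).

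\emph{Main obstacle.} The only delicate point is the meaning of ``unique''. Without some additivity or affinity requirement on the extension, values on non-projective effects are unconstrained, since effects are not in the domain of Gleason's theorem. So the key step is the spectral/monotonicity argument that pins $q(\lambda Q)=\lambda\, q(Q)$.
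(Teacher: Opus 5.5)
Your proposal is correct. On existence it coincides with the paper's proof: define \(p(E)=\Tr(\rho E)\) with the \(\rho\) from Theorem~\ref{thm:born_rule_appendix}, note agreement on projections, and get additivity from linearity of the trace.

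The genuine difference is in what ``unique'' is made to mean. The paper justifies it only by observing that \(\rho\) is uniquely fixed by the projection-level assignment, so the formula \(\Tr(\rho E)\) is unambiguous. It never shows that no other extension exists, and, as you correctly note, none could be shown without some additivity requirement. You instead prove genuine uniqueness within a natural class: \([0,1]\)-valued maps on \(\mathcal{E}(\mathcal{H})\) that agree with \(p\) on projections and satisfy the corollary's own additivity clause. Your argument runs through the spectral decomposition, then \(q(\lambda Q)=\lambda\,q(Q)\) for rational \(\lambda\), then extension to real \(\lambda\) by monotonicity and a squeeze, and finally additivity over the orthogonal spectral pieces. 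This gives the statement real content that the paper's one-line justification does not supply. You also verify \(0\le \Tr(\rho E)\le 1\), which the paper omits but which is needed for the extension to be a probability assignment at all.

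Three small precision points:
\begin{itemize}
\item If ``affine'' really includes convex-linearity, then \(q(\lambda Q)=\lambda\,q(Q)+(1-\lambda)\,q(0)=\lambda\,q(Q)\) is immediate, using \(q(0)=0\) from additivity. The rational-approximation step then becomes unnecessary.
\item If you assume only additivity, state explicitly that \(q\geq 0\), since your monotonicity step depends on it.
\item The step doing the work is the finite-dimensional spectral theorem, not trace duality (input (vi)).
\end{itemize}
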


\begin{proof}
By Theorem~\ref{thm:born_rule_appendix}, the projection assignment is represented by a unique density operator \(\rho\). Define \(p(E) \coloneqq \Tr(\rho\,E)\) for \(E \in \mathcal{E}(\mathcal{H})\). This agrees with the original assignment on projections. If \(E + F \leq \mathbf{I}\), then
\[
p(E + F) = \Tr(\rho(E + F)) = \Tr(\rho\,E) + \Tr(\rho\,F) = p(E) + p(F)
\]
by linearity of the trace. The extension is canonical because \(\rho\) is uniquely determined by the projection-level assignment.
\end{proof}

\subsubsection[Uniqueness of Lüders Conditioning]{Uniqueness of Lüders Conditioning}

\begin{theorem}[Lüders Update Rule]
\label{thm:born_luders}
Let \(P_\alpha\) be a recorded boundary event with \(\Tr(\rho\,P_\alpha) > 0\). Among all posterior density operators \(\rho_\alpha\), there is a unique one satisfying:
\begin{enumerate}
    \item[(L1)] \(\Tr(\rho_\alpha\,P_\alpha) = 1\),
    \item[(L2)] \(\Tr(\rho_\alpha\,Q) = \dfrac{\Tr(\rho\,Q)}{\Tr(\rho\,P_\alpha)}\) for every projection \(Q \leq P_\alpha\).
\end{enumerate}
It is
\begin{equation}
\rho_\alpha = \frac{P_\alpha\,\rho\,P_\alpha}{\Tr(\rho\,P_\alpha)}.
\end{equation}
\end{theorem}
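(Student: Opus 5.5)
The proof splits into existence and uniqueness. Throughout, write \(p_\alpha \coloneqq \Tr(\rho\,P_\alpha) > 0\).

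\emph{Existence.} Set \(\rho_\alpha = P_\alpha\rho P_\alpha/p_\alpha\). It is positive, since \(\langle v|P_\alpha\rho P_\alpha|v\rangle = \langle P_\alpha v|\rho|P_\alpha v\rangle \geq 0\). By cyclicity of the trace and \(P_\alpha^2 = P_\alpha\), \(\Tr(P_\alpha\rho P_\alpha) = \Tr(\rho P_\alpha) = p_\alpha\), so \(\Tr(\rho_\alpha) = 1\). This normalization computation is exactly (L1). For \(Q \leq P_\alpha\) we have \(P_\alpha Q P_\alpha = Q\), hence \(\Tr(P_\alpha\rho P_\alpha Q) = \Tr(\rho\,P_\alpha Q P_\alpha) = \Tr(\rho Q)\), which gives (L2).

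\emph{Uniqueness.} Let \(\sigma\) be any density operator satisfying (L1) and (L2). The plan has two steps.

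The first step is to show that \(\sigma\) is supported in \(P_\alpha\mathcal{H}_F\). Condition (L1) gives \(\Tr(\sigma(\mathbf{I}-P_\alpha)) = 0\). Write \(P_\alpha^\perp = \mathbf{I}-P_\alpha\). Since \(\sigma \geq 0\), we have \(\Tr(P_\alpha^\perp\sigma P_\alpha^\perp) = \|\sigma^{1/2}P_\alpha^\perp\|_2^2 = 0\), so \(\sigma^{1/2}P_\alpha^\perp = 0\) and therefore \(\sigma P_\alpha^\perp = P_\alpha^\perp\sigma = 0\). It follows that \(\sigma = P_\alpha\sigma P_\alpha\).

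The second step is to compare \(\sigma\) with \(\rho_\alpha\) on the compressed space \(\mathcal{K} = P_\alpha\mathcal{H}_F\). Both \(\sigma\) and \(\rho_\alpha\) are operators on \(\mathcal{K}\). By (L2) and the existence computation, they agree on every projection \(Q \leq P_\alpha\), that is, on every projection of \(\mathcal{K}\). In particular they agree on every rank-one projection \(|v\rangle\langle v|\) with \(v \in \mathcal{K}\) a unit vector. Hence the operator \(\Delta = \sigma - \rho_\alpha\) restricted to \(\mathcal{K}\) is self-adjoint and satisfies \(\langle v|\Delta|v\rangle = 0\) for all \(v \in \mathcal{K}\). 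For a self-adjoint operator on a complex space, vanishing of the quadratic form forces \(\Delta = 0\), via polarization or the spectral theorem. This is finite-dimensional trace duality, input (vi): projections span the self-adjoint operators on \(\mathcal{K}\). Combining with the first step, \(\sigma = \rho_\alpha\) on all of \(\mathcal{H}_F\).

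The only step that needs care is the first one, which promotes the trace condition (L1) to an operator identity. Without it, (L2) constrains only the compression \(P_\alpha\sigma P_\alpha\), and the off-diagonal blocks \(P_\alpha\sigma P_\alpha^\perp\) would be left free. The positivity argument through \(\sigma^{1/2}\) closes this gap. No appeal to Gleason's theorem is needed. The argument also works when \(\mathrm{rank}\,P_\alpha = 1\) (the \(\C\) sector), where both steps are immediate.
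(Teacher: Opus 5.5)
Your proof is correct and follows the paper's argument. Existence is direct verification of (L1)--(L2) for \(P_\alpha\rho P_\alpha/\Tr(\rho P_\alpha)\) via cyclicity and \(P_\alpha Q P_\alpha = Q\). Uniqueness shows \(\sigma\) is supported on \(P_\alpha\mathcal{H}\), then matches quadratic forms on rank-one projections \(|\psi\rangle\langle\psi| \le P_\alpha\) and concludes by polarization. Your \(\sigma^{1/2}\) argument for \(\sigma = P_\alpha\sigma P_\alpha\) and your positivity check of the candidate make explicit two steps the paper only asserts.
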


\begin{proof}
Set \(\widetilde{\rho}_\alpha \coloneqq P_\alpha\,\rho\,P_\alpha / \Tr(\rho\,P_\alpha)\).

\medskip\noindent\textbf{Verification of (L1).}\quad
\(\Tr(\widetilde{\rho}_\alpha\,P_\alpha) = \Tr(P_\alpha\,\rho\,P_\alpha\,P_\alpha) / \Tr(\rho\,P_\alpha) = \Tr(P_\alpha\,\rho\,P_\alpha) / \Tr(\rho\,P_\alpha) = 1\).

\medskip\noindent\textbf{Verification of (L2).}\quad
Let \(Q \leq P_\alpha\). Then \(Q = P_\alpha\,Q\,P_\alpha\), hence by cyclicity of the trace,
\[
\Tr(\widetilde{\rho}_\alpha\,Q)
= \frac{\Tr(P_\alpha\,\rho\,P_\alpha\,Q)}{\Tr(\rho\,P_\alpha)}
= \frac{\Tr(\rho\,P_\alpha\,Q\,P_\alpha)}{\Tr(\rho\,P_\alpha)}
= \frac{\Tr(\rho\,Q)}{\Tr(\rho\,P_\alpha)}.
\]

\medskip\noindent\textbf{Uniqueness.}\quad
Let \(\sigma\) be any density operator satisfying (L1) and (L2). Since \(\Tr(\sigma\,P_\alpha) = 1\) and \(\sigma \geq 0\), we have \(\sigma = P_\alpha\,\sigma\,P_\alpha\), so \(\sigma\) is supported on \(P_\alpha\,\mathcal{H}\). For every unit vector \(|\psi\rangle \in P_\alpha\,\mathcal{H}\), the rank-one projection \(|\psi\rangle\langle\psi| \leq P_\alpha\) satisfies
\[
\langle\psi|\sigma|\psi\rangle = \Tr(\sigma\,|\psi\rangle\langle\psi|) = \Tr(\widetilde{\rho}_\alpha\,|\psi\rangle\langle\psi|) = \langle\psi|\widetilde{\rho}_\alpha|\psi\rangle
\]
by (L2). Hence the quadratic forms of \(\sigma\) and \(\widetilde{\rho}_\alpha\) agree on all vectors in \(P_\alpha\,\mathcal{H}\). Since both operators are self-adjoint, the polarization identity
\[
\langle\phi|A|\psi\rangle = \tfrac{1}{4}\sum_{k=0}^{3} i^k \langle\phi + i^k\psi|A|\phi + i^k\psi\rangle
\]
implies \(\langle\phi|\sigma|\psi\rangle = \langle\phi|\widetilde{\rho}_\alpha|\psi\rangle\) for all \(|\phi\rangle, |\psi\rangle \in P_\alpha\,\mathcal{H}\). Therefore \(\sigma = \widetilde{\rho}_\alpha\).
\end{proof}

\subsubsection[Compatibility with Unitary Evolution]{Compatibility with Unitary Evolution}

\begin{theorem}[Epistemic Evolution]
\label{thm:born_evolution}
Let the ontological state evolve unitarily by \(U\). Then the unique evolution law on epistemic states compatible with the Born rule at all times is
\begin{equation}
\rho \longmapsto U\,\rho\,U^\dagger.
\end{equation}
\end{theorem}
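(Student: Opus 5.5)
To prove Theorem~\ref{thm:born_evolution}, I will first make ``compatible with the Born rule at all times'' precise. The plan is to reduce the claim to a uniqueness statement about density operators, using Theorem~\ref{thm:born_rule_appendix} at two times.

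\textbf{Formalizing compatibility.} Let \(p_0(P) = \Tr(\rho\,P)\) be the observer's assignment before the evolution. Ontological unitarity (input (v)) means that the event ``\(P\) holds after the evolution'' corresponds to the event ``\(U^\dagger P U\) holds before it.'' This is the same orbit structure \(U^\dagger P_\alpha U\) already used in Section~\ref{subsec:born_rule} to motivate the lattice extension. Compatibility therefore demands that the later assignment satisfy \(p_1(P) = p_0(U^\dagger P U)\) for every \(P \in \mathcal{P}(\mathcal{H}_F)\). Compatibility with the Born rule at the later time also demands that \(p_1(P) = \Tr(\rho'\,P)\) for some density operator \(\rho'\).

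\textbf{Existence.} I check that \(\rho' = U\rho U^\dagger\) works. It is positive with unit trace, and cyclicity gives \(\Tr(U\rho U^\dagger P) = \Tr(\rho\, U^\dagger P U) = p_0(U^\dagger P U)\). The assignment \(P \mapsto p_0(U^\dagger P U)\) is itself normalized, finitely additive and noncontextual, because conjugation by \(U\) is a lattice automorphism that preserves orthogonality and sums. Theorem~\ref{thm:born_rule_appendix} therefore applies to \(p_1\) directly.

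\textbf{Uniqueness.} Suppose \(\rho'\) and \(\rho''\) both reproduce \(p_1\). Then \(\Tr((\rho'-\rho'')P)=0\) for every projection \(P\), and in particular for every rank-one \(|\psi\rangle\langle\psi|\). The self-adjoint operator \(\rho'-\rho''\) therefore has a vanishing quadratic form, and polarization (exactly as in the uniqueness step of Theorem~\ref{thm:born_luders}) gives \(\rho'=\rho''\). Equivalently, this is trace duality (input (vi)): projections span \(\mathcal{B}(\mathcal{H}_F)\), so the trace pairing separates operators. Alternatively, one can invoke the uniqueness clause of Gleason's theorem (Theorem~\ref{thm:gleason_external}). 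Either way the evolution law \(\rho \mapsto U\rho U^\dagger\) is forced, and linearity in \(\rho\) follows automatically.

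\textbf{Main obstacle.} The hard step is the first one: justifying the Heisenberg correspondence \(p_1(P) = p_0(U^\dagger P U)\) as the meaning of ``compatible.'' Someone could instead propose a map that preserves only the Aperture statistics \(\Tr(\rho P_\alpha)\) at each time. I would argue that the requirement must hold for every time step \(U(t,s)\) and for all events in the extended lattice, which is the domain adopted in Section~\ref{subsec:born_rule}. Once it is imposed on the whole lattice, the rest is the short trace-duality argument above. I would also remark that \(\rho\mapsto U^\dagger\rho U\) is excluded by the orientation convention \(|\Psi\rangle\mapsto U|\Psi\rangle\).
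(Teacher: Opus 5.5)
Your proposal is correct and matches the paper's proof: the paper also takes compatibility to mean \(p_t(P) = p_0(U^\dagger P\,U)\) and uses cyclicity to rewrite \(\Tr(\rho_0\,U^\dagger P\,U) = \Tr(U\rho_0 U^\dagger P)\). It likewise gets uniqueness from the uniqueness of the density-operator representation in Theorem~\ref{thm:born_rule_appendix}; your check that \(P \mapsto p_0(U^\dagger P\,U)\) satisfies the Gleason hypotheses and your polarization/trace-duality argument just make explicit what the paper leaves implicit.
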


\begin{proof}
Let \(P\) be any projection. Consistency of epistemic probabilities with ontological unitary evolution requires \(p_t(P) = p_0(U^\dagger P\,U)\). Using the Born rule at time \(0\),
\[
p_t(P) = \Tr(\rho_0\,U^\dagger P\,U) = \Tr(U\,\rho_0\,U^\dagger\,P).
\]
Therefore the time-\(t\) probability assignment is represented by \(\rho_t \coloneqq U\,\rho_0\,U^\dagger\). Uniqueness follows from uniqueness of the density-operator representation in Theorem~\ref{thm:born_rule_appendix}.
\end{proof}

\subsubsection[Summary]{Summary}

\begin{theorem}[Uniqueness of the AT Inferential Framework]
\label{thm:born_full}
Within Accessibility Theory, the unique inferential framework compatible with (i)~coherent valuation of Aperture outcomes (Lemma~\ref{lem:aperture_valuation}), (ii)~a normalized finitely additive noncontextual extension of the resulting probability assignment to the full projection lattice, (iii)~finite-dimensionality of the internal Hilbert space with \(\dim \geq 3\), (iv)~Bayesian conditioning on boundary events, and (v)~unitary ontological evolution consists of the following three rules. The \textbf{Born rule} assigns probabilities to projections via a unique density operator:
\begin{equation}
p(P) = \Tr(\rho\,P). \label{eq:born_summary}
\end{equation}
Between interactions, the epistemic state evolves by \textbf{unitary conjugation}:
\begin{equation}
\rho \;\longmapsto\; U\,\rho\,U^\dagger. \label{eq:unitary_summary}
\end{equation}
Upon recording a boundary event \(\alpha\), the epistemic state updates by \textbf{Lüders conditioning}:
\begin{equation}
\rho \;\longmapsto\; \frac{P_\alpha\,\rho\,P_\alpha}{\Tr(\rho\,P_\alpha)}. \label{eq:luders_summary}
\end{equation}
\end{theorem}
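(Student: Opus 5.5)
The statement is a synthesis of Theorem~\ref{thm:born_rule_appendix}, Theorem~\ref{thm:born_evolution} and Theorem~\ref{thm:born_luders}. I would prove it by showing two things: each of the three displayed rules is forced by the listed inputs, and nothing else in the inferential framework is left free.

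\emph{Step 1: the Born rule.} Start from inputs (i)--(iii). Lemma~\ref{lem:aperture_valuation} turns any coherent valuation into a probability vector on $\mathcal{R}_{\text{Ap}}$. By input (ii) this extends to a normalized, finitely additive, noncontextual $p$ on $\mathcal{P}(\mathcal{H}_F)$. Theorem~\ref{thm:born_rule_appendix}, which is Gleason's theorem applied with $\dim_{\C}\mathcal{H}_F = 48\ge 3$, then gives a unique density operator $\rho$ with $p(P)=\Tr(\rho P)$. This is \eqref{eq:born_summary}. I will also record the consistency condition $p_\alpha=\Tr(\rho P_\alpha)$, which ties the Aperture-level data to $\rho$.

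\emph{Step 2: evolution between interactions.} Using input (v), invoke Theorem~\ref{thm:born_evolution}. The requirement $p_t(P)=p_0(U^\dagger P U)$, combined with the uniqueness of the Gleason representative from Step~1, yields $\rho_t=U\rho U^\dagger$, which is \eqref{eq:unitary_summary}. I must check one point: the evolved assignment $P\mapsto p_0(U^\dagger PU)$ again satisfies normalization, additivity and noncontextuality. It does, because conjugation by a unitary is a lattice automorphism of $\mathcal{P}(\mathcal{H}_F)$ that preserves orthogonality and sums.

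\emph{Step 3: update on a record.} Here I need to show that input (iv), Bayesian conditioning on the event $P_\alpha$, implies conditions (L1) and (L2) of Theorem~\ref{thm:born_luders}. Ordinary conditioning gives $p(Q\mid P_\alpha)=p(Q\wedge P_\alpha)/p(P_\alpha)$. For $Q\le P_\alpha$ we have $Q\wedge P_\alpha=Q$, so this becomes (L2); taking $Q=P_\alpha$ gives (L1). The posterior assignment must itself be a Born assignment, since Step~1 applies equally to the posterior. Theorem~\ref{thm:born_luders} then forces $\rho_\alpha=P_\alpha\rho P_\alpha/\Tr(\rho P_\alpha)$, which is \eqref{eq:luders_summary}.

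\emph{Main obstacle.} The delicate part is the meaning of ``unique framework'' in Step~3. Bayesian conditioning only constrains the posterior on projections $Q$ with $Q\le P_\alpha$ (and, through (L1), on those orthogonal to $P_\alpha$), not on projections that do not commute with $P_\alpha$. My approach is to argue that (L1) already fixes the posterior on such $Q$: it forces the support of $\rho_\alpha$ into $P_\alpha\mathcal{H}_F$, and the polarization argument in the proof of Theorem~\ref{thm:born_luders} then determines $\rho_\alpha$ completely. The values on non-commuting projections are therefore consequences of the support condition rather than additional choices. With that point settled, uniqueness of the whole triple of rules follows because each rule is pinned down by a uniqueness statement already proved.
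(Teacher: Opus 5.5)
Your proposal is correct and takes the same approach as the paper, whose proof likewise assembles Theorems~\ref{thm:born_rule_appendix}, \ref{thm:born_luders} and \ref{thm:born_evolution} and relies on the uniqueness established in each. Your extra checks spell out steps the paper leaves implicit: that Bayesian conditioning yields (L1)--(L2), that the unitarily evolved assignment is again admissible, and that posterior values on projections not commuting with $P_\alpha$ follow from the support condition, (L2) and polarization rather than being free choices.
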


\begin{proof}
Born rule: Theorem~\ref{thm:born_rule_appendix}. Effect extension: Corollary~\ref{cor:born_effect_extension}. Lüders update: Theorem~\ref{thm:born_luders}. Unitary evolution: Theorem~\ref{thm:born_evolution}. Uniqueness at each step is established in the respective proofs.
\end{proof}

\begin{remark}[Relationship to Gleason's theorem]
\label{rem:born_gleason}
Gleason's theorem \cite{Gleason1957} is used in Theorem~\ref{thm:born_rule_appendix} to establish that the extended probability assignment is uniquely representable by a density operator. The present derivation builds on this to determine the full inferential framework: the canonical extension to effects, the update rule, and the evolution law. These go beyond the scope of Gleason's theorem, which provides only the Born rule form on projections.
\end{remark}

\begin{remark}[Extended interactions and arbitrary states]
\label{rem:born_extended}
The Born rule form, the Lüders update, and the unitary evolution law are valid for any density operator \(\rho\): they are the derived inferential rules, not properties of any specific state. When an observer acquires information through extended interactions that resolve intra-sector degrees of freedom, the effective \(\rho\) within a sector can be any density operator compatible with the acquired data.
\end{remark}
\subsection[Non-Markovianity of the Aperture Record]{Non-Markovianity of the Aperture Record}
\label{app:nonmarkov}

\begin{definition}[Aperture Interaction Instrument]
\label{def:aperture_instrument}
Let \(\{P_\alpha\}_{\alpha = 1}^{3} \subset \pi(\mathfrak{A}_{F,b})\) be the minimal central projections of the boundary center, corresponding to the three simple summands of \(\mathfrak{A}_F^{\C} \cong \C \oplus M_2(\C) \oplus M_3(\C)\).  An \textbf{Aperture interaction} at time \(t_k\) is the event of registering a sector label \(\alpha_k \in \{1, 2, 3\}\) together with the associated conditioning map on epistemic states
\begin{equation}
\mathcal{M}_\alpha(\sigma) \coloneqq \frac{P_\alpha\,\sigma\,P_\alpha}{\Tr(P_\alpha\,\sigma)},
\end{equation}
defined whenever \(\Tr(P_\alpha\,\sigma) > 0\).  Between interactions, epistemic states evolve by the unitary channel
\begin{equation}
\mathcal{U}_{k,k-1}(\sigma) \coloneqq U(t_k, t_{k-1})\,\sigma\,U(t_k, t_{k-1})^\dagger.
\end{equation}
The L\"uders form of the conditioning map is the unique posterior state satisfying the standard ideal-measurement axioms (outcome determinacy and preservation of sub-event probabilities) (Theorem~\ref{thm:born_luders}).
\end{definition}

\begin{definition}[Aperture Record Process]
\label{def:aperture_record}
Fix an initial epistemic state \(\rho\) and a sequence of interaction times \(t_0 < t_1 < \cdots < t_n\).  The \textbf{Aperture record process} is the classical stochastic process \(\{\alpha_k\}_{k=1}^{n}\) with outcome space \(\{1, 2, 3\}\) and joint distribution
\begin{equation}
\mathrm{Pr}(\alpha_n, \ldots, \alpha_1) \coloneqq \Tr\!\Big(P_{\alpha_n}\,\mathcal{U}_{n,n-1}\!\big(P_{\alpha_{n-1}} \cdots \mathcal{U}_{2,1}\!\big(P_{\alpha_1}\,\mathcal{U}_{1,0}(\rho)\,P_{\alpha_1}\big) \cdots P_{\alpha_{n-1}}\big)\Big).
\label{eq:aperture_joint}
\end{equation}
The history-conditioned post-interaction state is defined recursively by \(\sigma_0 \coloneqq \rho\) and
\begin{equation}
\sigma_k(\alpha_k, \ldots, \alpha_1) \coloneqq \mathcal{M}_{\alpha_k} \circ \mathcal{U}_{k,k-1}\!\big(\sigma_{k-1}(\alpha_{k-1}, \ldots, \alpha_1)\big).
\label{eq:state_recursion}
\end{equation}
\end{definition}

\begin{definition}[Markovian Aperture Record]
\label{def:markov_record}
The Aperture record process is \textbf{Markovian} if for each \(k\) and each \(\beta \in \{1, 2, 3\}\), the conditional probability \(\mathrm{Pr}(\alpha_{k+1} = \beta \mid \alpha_k, \ldots, \alpha_1)\) depends only on \(\alpha_k\).
\end{definition}

\begin{definition}[Aperture Sufficiency (Reset Property)]
\label{def:reset_property}
The Aperture interaction is \textbf{sufficient} if for each \(k\) and each \(\alpha \in \{1, 2, 3\}\) there exists a state \(\omega_\alpha^{(k)}\) supported on \(P_\alpha\,\mathcal{H}\) such that for all histories ending in \(\alpha\),
\begin{equation}
\sigma_k(\alpha, \alpha_{k-1}, \ldots, \alpha_1) = \omega_\alpha^{(k)}.
\label{eq:reset}
\end{equation}
\end{definition}

\begin{theorem}[Sufficient Condition for Markovianity]
\label{thm:markov_criterion}
If the Aperture interaction is sufficient (Definition~\ref{def:reset_property}), then the Aperture record process is Markovian (Definition~\ref{def:markov_record}).
\end{theorem}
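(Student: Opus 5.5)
The plan is to compute the conditional probability of the next record label directly from the joint distribution \eqref{eq:aperture_joint}, and to show that under the reset property it depends on the history only through \(\alpha_k\). The first step is a factorization lemma: the unnormalized nested expression in \eqref{eq:aperture_joint} for the history \((\alpha_k,\ldots,\alpha_1)\) equals \(\mathrm{Pr}(\alpha_k,\ldots,\alpha_1)\,\sigma_k(\alpha_k,\ldots,\alpha_1)\). I would prove this by induction on \(k\). The recursion \eqref{eq:state_recursion} contains the normalization \(\Tr(P_{\alpha_k}\,\mathcal{U}_{k,k-1}(\sigma_{k-1}))\). Multiplying this factor by \(\mathrm{Pr}(\alpha_{k-1},\ldots,\alpha_1)\) gives exactly \(\mathrm{Pr}(\alpha_k,\ldots,\alpha_1)\), because the unitary channel is linear and conjugation by \(P_{\alpha_k}\) is linear. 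Cyclicity together with \(P^2=P\) lets the outermost trace be written as \(\Tr(P\,X\,P)=\Tr(P\,X)\). The base case \(k=1\) is immediate from \(\sigma_0=\rho\).

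From the factorization, whenever \(\mathrm{Pr}(\alpha_k,\ldots,\alpha_1)>0\), the joint law for \(k+1\) steps is
\[
\mathrm{Pr}(\beta,\alpha_k,\ldots,\alpha_1)=\mathrm{Pr}(\alpha_k,\ldots,\alpha_1)\,\Tr\!\bigl(P_\beta\,\mathcal{U}_{k+1,k}(\sigma_k(\alpha_k,\ldots,\alpha_1))\bigr).
\]
This again uses linearity and \(\Tr(P_\beta X P_\beta)=\Tr(P_\beta X)\). Dividing gives
\[
\mathrm{Pr}(\alpha_{k+1}=\beta\mid\alpha_k,\ldots,\alpha_1)=\Tr\!\bigl(P_\beta\,U\,\sigma_k\,U^\dagger\bigr)=\Tr\!\bigl(E^{(\alpha_k)}_\beta\,\sigma_k\bigr),
\]
where \(U=U(t_{k+1},t_k)\). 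The last equality uses \(\sigma_k=P_{\alpha_k}\sigma_kP_{\alpha_k}\), which holds because the Lüders map outputs states supported on \(P_{\alpha_k}\mathcal{H}\). This identifies the conditional probability with the transition effect operator of Section~\ref{subsec:nonmarkov}.

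Now invoke sufficiency (Definition~\ref{def:reset_property}): \(\sigma_k(\alpha_k,\ldots,\alpha_1)=\omega^{(k)}_{\alpha_k}\) for every history ending in \(\alpha_k\). The conditional probability therefore equals \(\Tr(P_\beta\,U\,\omega^{(k)}_{\alpha_k}U^\dagger)\), which depends only on \(\alpha_k\) (and on \(k\) and \(\beta\)). This is exactly Definition~\ref{def:markov_record}.

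I expect the main obstacle to be bookkeeping rather than any conceptual step. The factorization must be stated precisely, and histories of zero probability have to be handled. For those histories the conditional probability is undefined and \(\sigma_k\) is undefined, so I would restrict both the reset hypothesis and the Markov condition to histories of positive probability. I would also be careful about the indexing of \(\sigma_k\): it is the post-interaction state at time \(t_k\), so the correct unitary to apply is \(\mathcal{U}_{k+1,k}\).
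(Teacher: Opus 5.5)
Your proposal is correct and follows the same route as the paper: write \(\mathrm{Pr}(\alpha_{k+1}=\beta\mid\alpha_k,\ldots,\alpha_1)=\Tr\bigl(P_\beta\,\mathcal{U}_{k+1,k}(\sigma_k)\bigr)\), then use the reset property \(\sigma_k=\omega^{(k)}_{\alpha_k}\). The only difference is that you prove this conditional-probability formula, via the inductive factorization of the nested expression as \(\mathrm{Pr}(\alpha_k,\ldots,\alpha_1)\,\sigma_k\) and a restriction to positive-probability histories, whereas the paper simply asserts it as following from the definition of the record process.
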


\begin{proof}
By Definition~\ref{def:aperture_record},
\begin{equation}
\mathrm{Pr}(\alpha_{k+1} = \beta \mid \alpha_k, \ldots, \alpha_1) = \Tr\!\Big(P_\beta\,\mathcal{U}_{k+1,k}\!\big(\sigma_k(\alpha_k, \ldots, \alpha_1)\big)\Big).
\label{eq:conditional_prob}
\end{equation}
If the Aperture interaction is sufficient, then \(\sigma_k\) depends only on \(\alpha_k\) by \eqref{eq:reset}, so the right-hand side depends only on \(\alpha_k\).
\end{proof}

\begin{definition}[Aperture-Mixing Dynamics]
\label{def:aperture_mixing}
The AT dynamics is \textbf{Aperture-mixing} if there exist sector labels \(\alpha, \beta\) and two density operators \(\sigma \neq \sigma'\) supported on \(P_\alpha\,\mathcal{H}\) such that for some inter-interaction unitary \(U\),
\begin{equation}
\Tr\!\big(P_\beta\,U\,\sigma\,U^\dagger\big) \neq \Tr\!\big(P_\beta\,U\,\sigma'\,U^\dagger\big).
\end{equation}
\end{definition}

\begin{lemma}[Algebraic Characterization of Aperture Mixing]
\label{lem:mixing_characterization}
For sector labels \(\alpha, \beta\) and a unitary \(U\), define the \textbf{transition effect operator}
\begin{equation}
E_\beta^{(\alpha)} \coloneqq P_\alpha\,U^\dagger\,P_\beta\,U\,P_\alpha,
\end{equation}
acting on \(P_\alpha\,\mathcal{H}\).  Then for any density operator \(\sigma\) supported on \(P_\alpha\,\mathcal{H}\),
\begin{equation}
\Tr(P_\beta\,U\,\sigma\,U^\dagger) = \Tr(E_\beta^{(\alpha)}\,\sigma).
\end{equation}
The dynamics is Aperture-mixing for the pair \((\alpha, \beta)\) if and only if \(E_\beta^{(\alpha)}\) is not proportional to \(P_\alpha\).  The next-step transition probabilities from sector \(\alpha\) are independent of the hidden post-interaction state if and only if
\begin{equation}
P_\alpha\,U^\dagger\,P_\beta\,U\,P_\alpha = c_{\alpha\beta}\,P_\alpha \qquad \text{for all } \beta,
\label{eq:exact_reset}
\end{equation}
for some constants \(c_{\alpha\beta} \geq 0\) with \(\sum_\beta c_{\alpha\beta} = 1\).
\end{lemma}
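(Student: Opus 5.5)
The identity \(\Tr(P_\beta\,U\,\sigma\,U^\dagger) = \Tr(E_\beta^{(\alpha)}\,\sigma)\) comes first and is immediate. For \(\sigma\) supported on \(P_\alpha\,\mathcal{H}\) we have \(\sigma = P_\alpha\,\sigma\,P_\alpha\). Cyclicity of the trace then gives
\[
\Tr(P_\beta U \sigma U^\dagger) = \Tr(U^\dagger P_\beta U P_\alpha \sigma P_\alpha) = \Tr(P_\alpha U^\dagger P_\beta U P_\alpha\,\sigma).
\]
Every subsequent claim therefore reduces to a statement about the linear functional \(\sigma \mapsto \Tr(E\sigma)\) on the convex set \(\mathcal{S}_\alpha\) of density operators supported on \(P_\alpha\,\mathcal{H}\). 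Here \(E \coloneqq E_\beta^{(\alpha)}\) is a self-adjoint operator on \(P_\alpha\,\mathcal{H}\) with \(0 \le E \le P_\alpha\).

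For the mixing characterization I will prove the contrapositive equivalence: the functional \(\sigma\mapsto\Tr(E\sigma)\) is constant on \(\mathcal{S}_\alpha\) if and only if \(E = cP_\alpha\).

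The direction \(E = cP_\alpha \Rightarrow \Tr(E\sigma) = c\,\Tr(P_\alpha\sigma) = c\) is trivial.

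For the converse, suppose \(\Tr(E\sigma)\) takes the same value \(c\) for all \(\sigma \in \mathcal{S}_\alpha\). Test on the rank-one states \(\sigma = |\psi\rangle\langle\psi|\) with \(\psi\) a unit vector in \(P_\alpha\mathcal{H}\). This gives \(\langle\psi|(E - cP_\alpha)|\psi\rangle = 0\) for every such \(\psi\). Since \(E - cP_\alpha\) is self-adjoint on \(P_\alpha\mathcal{H}\), polarization, exactly as in the uniqueness step of Theorem~\ref{thm:born_luders}, forces \(E = cP_\alpha\).

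Aperture mixing for \((\alpha,\beta)\) is by definition the existence of \(\sigma\ne\sigma'\) in \(\mathcal{S}_\alpha\) with different values of this functional, i.e.\ non-constancy. Non-constancy is therefore equivalent to \(E\) not being proportional to \(P_\alpha\).

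One subtlety needs a remark. When \(r_\alpha = 1\), \(\mathcal{S}_\alpha\) is a single point, so mixing is impossible. In that case every \(E\) on the one-dimensional space \(P_\alpha\mathcal{H}\) is automatically a scalar multiple of \(P_\alpha\), so the equivalence still holds.

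The final clause follows by applying the same dichotomy to every \(\beta\) simultaneously. Transition probabilities from sector \(\alpha\) are independent of the hidden state if and only if each \(E_\beta^{(\alpha)} = c_{\alpha\beta}P_\alpha\), which is \eqref{eq:exact_reset}. The constants are fixed by the following properties:
\begin{itemize}
\item \(c_{\alpha\beta} \ge 0\), because \(E_\beta^{(\alpha)} = (P_\beta U P_\alpha)^\dagger (P_\beta U P_\alpha) \ge 0\).
\item \(\sum_\beta c_{\alpha\beta} = 1\), because \(\sum_\beta P_\beta = \mathbf{I}\) gives \(\sum_\beta E_\beta^{(\alpha)} = P_\alpha U^\dagger U P_\alpha = P_\alpha\).
\end{itemize}
For the converse direction of this clause, \(c_{\alpha\beta} = \Tr(E_\beta^{(\alpha)}\sigma)\) for any \(\sigma\in\mathcal{S}_\alpha\) is exactly the (state-independent) transition probability.

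I expect no real obstacle. The only point requiring care is the polarization step, specifically checking that rank-one states supported in \(P_\alpha\mathcal{H}\) suffice to determine a self-adjoint operator restricted to that subspace. That check is the same argument already carried out in the proof of Theorem~\ref{thm:born_luders}.
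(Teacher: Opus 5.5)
Your proof is correct and follows essentially the same route as the paper. Both use the cyclicity identity with \(\sigma = P_\alpha\sigma P_\alpha\), the dichotomy between \(E_\beta^{(\alpha)}\) being scalar on \(P_\alpha\mathcal{H}\) or not, and normalization from \(\sum_\beta P_\beta = \mathbf{I}\). The differences are minor: the paper exhibits two eigenvectors of \(E_\beta^{(\alpha)}\) with distinct eigenvalues, whereas you argue the contrapositive by polarization, and you also make explicit the nonnegativity \(c_{\alpha\beta}\geq 0\) (from \(E_\beta^{(\alpha)} = (P_\beta U P_\alpha)^\dagger(P_\beta U P_\alpha)\geq 0\)), which the paper leaves implicit.
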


\begin{proof}
By the cyclic property of the trace and \(P_\alpha\,\sigma = \sigma\,P_\alpha = \sigma\):
\begin{equation}
\Tr(P_\beta\,U\,\sigma\,U^\dagger) = \Tr(U^\dagger\,P_\beta\,U\,\sigma) = \Tr(P_\alpha\,U^\dagger\,P_\beta\,U\,P_\alpha\,\sigma) = \Tr(E_\beta^{(\alpha)}\,\sigma).
\end{equation}
If \(E_\beta^{(\alpha)} = c_{\alpha\beta}\,P_\alpha\), then \(\Tr(E_\beta^{(\alpha)}\,\sigma) = c_{\alpha\beta}\) for all normalized \(\sigma\) on \(P_\alpha\,\mathcal{H}\), so transition probabilities are state-independent.  Conversely, if \(E_\beta^{(\alpha)}\) is not proportional to \(P_\alpha\), it has at least two distinct eigenvalues on \(P_\alpha\,\mathcal{H}\); choosing \(\sigma, \sigma'\) as eigenstates with different eigenvalues yields \(\Tr(E_\beta^{(\alpha)}\,\sigma) \neq \Tr(E_\beta^{(\alpha)}\,\sigma')\).  The normalization \(\sum_\beta c_{\alpha\beta} = 1\) follows from \(\sum_\beta P_\beta = \mathbf{I}\).
\end{proof}

\begin{proposition}[Non-Markovianity Witness Criterion]
\label{prop:mixing_implies_nonmarkov}
Fix a time step \(k\), a sector label \(\alpha\), and two histories \(h,h'\) ending in \(\alpha_k=\alpha\), with post-interaction states \(\sigma_k(h) \neq \sigma_k(h')\) supported on \(P_\alpha\mathcal{H}\).  If there exists a sector label \(\beta\) such that
\begin{equation}
\Tr\!\big(E_\beta^{(\alpha)}\,\sigma_k(h)\big)\neq \Tr\!\big(E_\beta^{(\alpha)}\,\sigma_k(h')\big),
\label{eq:witness_hypothesis}
\end{equation}
then the Aperture record process is non-Markovian.
\end{proposition}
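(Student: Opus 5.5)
The plan is to argue directly from Definition~\ref{def:markov_record}: exhibit two histories sharing the same last label whose conditional next-step probabilities differ. Since both histories end in \(\alpha_k=\alpha\), Markovianity would force \(\mathrm{Pr}(\alpha_{k+1}=\beta\mid h)=\mathrm{Pr}(\alpha_{k+1}=\beta\mid h')\) for every \(\beta\). So it suffices to show that this equality fails for the \(\beta\) supplied by the hypothesis~\eqref{eq:witness_hypothesis}.

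The first step expresses the conditional probabilities through the post-interaction states. This is formula~\eqref{eq:conditional_prob} from the proof of Theorem~\ref{thm:markov_criterion}:
\[
\mathrm{Pr}(\alpha_{k+1}=\beta\mid h)=\Tr\bigl(P_\beta\,U\,\sigma_k(h)\,U^\dagger\bigr), \qquad U=U(t_{k+1},t_k).
\]
I would briefly justify it from the joint distribution~\eqref{eq:aperture_joint} and the recursion~\eqref{eq:state_recursion}. The unnormalized nested product for \((\beta,h)\) equals \(P_\beta\,\mathcal{U}_{k+1,k}(\cdot)\,P_\beta\) applied to \(\mathrm{Pr}(h)\,\sigma_k(h)\). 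This holds because each L\"uders step divides by exactly the probability factor that the joint distribution accumulates, so the normalizations telescope. Dividing by \(\mathrm{Pr}(h)\) gives the display above.

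We must also require \(\mathrm{Pr}(h),\mathrm{Pr}(h')>0\), so that the conditionals and \(\sigma_k\) are defined. This is implicit in the hypothesis that \(\sigma_k(h)\) and \(\sigma_k(h')\) exist.

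The second step applies Lemma~\ref{lem:mixing_characterization}. Since \(\sigma_k(h)\) and \(\sigma_k(h')\) are supported on \(P_\alpha\mathcal{H}\), the lemma gives \(\Tr(P_\beta U\sigma U^\dagger)=\Tr(E_\beta^{(\alpha)}\sigma)\) for each of them. Here \(E^{(\alpha)}_\beta\) is built from the same step unitary \(U=U(t_{k+1},t_k)\). Hypothesis~\eqref{eq:witness_hypothesis} then says the two conditional probabilities of \(\alpha_{k+1}=\beta\) differ, even though both histories end in \(\alpha\). Hence the conditional law is not a function of \(\alpha_k\) alone, contradicting Markovianity.

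The main obstacle is the bookkeeping in the first step: checking that the normalized recursion reproduces the conditional of the unnormalized joint distribution. I would handle it by induction on \(k\). The induction claim is that the unnormalized nested operator equals \(\mathrm{Pr}(\alpha_k,\dots,\alpha_1)\,\sigma_k\), and the inductive step uses \(\Tr(P_{\alpha}\,\mathcal{U}(\cdot))\) as the normalizing factor in \(\mathcal{M}_\alpha\). One interpretive point also needs care. The unitary \(U\) in \(E^{(\alpha)}_\beta\) must be the one for the step \(t_k\to t_{k+1}\); I would state this explicitly so that the statement is unambiguous.
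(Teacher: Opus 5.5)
Your proposal is correct and follows the paper's proof exactly. The paper likewise combines \eqref{eq:conditional_prob} with Lemma~\ref{lem:mixing_characterization} to rewrite both conditional probabilities as \(\Tr(E_\beta^{(\alpha)}\sigma_k(h))\) and \(\Tr(E_\beta^{(\alpha)}\sigma_k(h'))\), whose inequality rules out Markovianity. Your extra care fills in details the paper leaves implicit: the inductive telescoping check that \eqref{eq:conditional_prob} follows from \eqref{eq:aperture_joint} and \eqref{eq:state_recursion}, the requirement \(\mathrm{Pr}(h),\mathrm{Pr}(h')>0\), and fixing \(U=U(t_{k+1},t_k)\) in \(E_\beta^{(\alpha)}\).
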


\begin{proof}
By \eqref{eq:conditional_prob} and Lemma~\ref{lem:mixing_characterization},
\[
\Pr(\alpha_{k+1}=\beta \mid h) = \Tr\!\big(E_\beta^{(\alpha)}\,\sigma_k(h)\big)
\quad\text{and}\quad
\Pr(\alpha_{k+1}=\beta \mid h') = \Tr\!\big(E_\beta^{(\alpha)}\,\sigma_k(h')\big).
\]
By \eqref{eq:witness_hypothesis} these are unequal, so the conditional probability of \(\alpha_{k+1}\) depends on information beyond \(\alpha_k\).
\end{proof}

\begin{theorem}[Non-Markovianity of the AT Aperture Record]
\label{thm:at_nonmarkovianity}
Let \(\alpha\) be a sector with internal rank \(r_\alpha \coloneqq \dim_{\C}(P_\alpha\,\mathcal{H}_F) \geq 2\).  Let \(U\) be an inter-interaction unitary for which \(E_\beta^{(\alpha)} = P_\alpha\,U^\dagger\,P_\beta\,U\,P_\alpha\) is not proportional to \(P_\alpha\) for some \(\beta\).  If two histories \(h, h'\) ending in \(\alpha_k = \alpha\) realize post-interaction states \(\sigma_k(h) \neq \sigma_k(h')\) that are distinguished by \(E_\beta^{(\alpha)}\), then the Aperture record process admits a non-Markovianity witness.
\end{theorem}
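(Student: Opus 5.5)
The plan is to reduce the statement directly to Proposition~\ref{prop:mixing_implies_nonmarkov}. That proposition already yields a non-Markovianity witness once we have two histories $h,h'$ ending in the same label $\alpha$, whose post-interaction states $\sigma_k(h)\neq\sigma_k(h')$ are supported on $P_\alpha\mathcal{H}_F$, and some $\beta$ with $\Tr(E_\beta^{(\alpha)}\sigma_k(h))\neq\Tr(E_\beta^{(\alpha)}\sigma_k(h'))$. So the work is to check that each hypothesis of the theorem delivers the corresponding ingredient of the proposition, using the same inter-interaction unitary $U=U(t_{k+1},t_k)$ in both places.

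\textbf{Step 1 (support).} From the recursion \eqref{eq:state_recursion}, $\sigma_k(h)=\mathcal{M}_\alpha(\mathcal{U}_{k,k-1}(\sigma_{k-1}))$ has the form $P_\alpha X P_\alpha/\Tr(P_\alpha X)$. It is therefore a density operator with $P_\alpha\sigma_k=\sigma_k P_\alpha=\sigma_k$, i.e.\ it is supported on $P_\alpha\mathcal{H}_F$. The same holds for $h'$. This is needed so that the identity $\Tr(P_\beta U\sigma U^\dagger)=\Tr(E_\beta^{(\alpha)}\sigma)$ of Lemma~\ref{lem:mixing_characterization} applies to both states.

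\textbf{Step 2 (conditional probabilities).} By \eqref{eq:conditional_prob}, $\Pr(\alpha_{k+1}=\beta\mid h)=\Tr(P_\beta U\sigma_k(h)U^\dagger)$. By Step 1 and Lemma~\ref{lem:mixing_characterization}, this equals $\Tr(E_\beta^{(\alpha)}\sigma_k(h))$, and likewise for $h'$. The hypothesis that the two states are \emph{distinguished by} $E_\beta^{(\alpha)}$ is read as exactly the inequality \eqref{eq:witness_hypothesis}. Proposition~\ref{prop:mixing_implies_nonmarkov} then gives two histories sharing $\alpha_k$ but with different conditional laws for $\alpha_{k+1}$, which is the witness against Definition~\ref{def:markov_record}.

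\textbf{Step 3 (role of the remaining hypotheses).} The rank condition $r_\alpha\ge2$ and the non-scalar condition on $E_\beta^{(\alpha)}$ are logically implied by the distinguishing hypothesis. If $r_\alpha=1$, every state on $P_\alpha\mathcal{H}_F$ is the same rank-one projector. If $E_\beta^{(\alpha)}=c\,P_\alpha$, every normalized state supported there gives the value $c$. I will note this explicitly, so the theorem is seen to be consistent and these hypotheses explain when the witness can exist at all. I will also remark that, by the converse part of Lemma~\ref{lem:mixing_characterization}, eigenvectors of $E_\beta^{(\alpha)}$ with distinct eigenvalues are states that would be distinguished if they were realized as post-interaction states.

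The main obstacle is interpretive rather than computational. One must fix that "distinguished by $E_\beta^{(\alpha)}$" means inequality of the expectation values for that particular $\beta$ and that same $U$. One must also confirm that both histories have positive probability, so that the conditionals in \eqref{eq:conditional_prob} are defined; this follows from $\sigma_k(h)$ being defined via $\mathcal{M}_\alpha$, which requires a nonzero trace at each step. With these settled, the proof is a two-line application of Proposition~\ref{prop:mixing_implies_nonmarkov}.
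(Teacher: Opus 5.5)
Your proposal is correct and follows the paper's proof, which likewise just observes that the hypotheses supply exactly the inputs of Proposition~\ref{prop:mixing_implies_nonmarkov}. Your additional checks are accurate refinements, not a different route: the support of $\sigma_k(h)$ on $P_\alpha\mathcal{H}_F$, the positivity of the conditioning histories, and fixing $U=U(t_{k+1},t_k)$; likewise your observation that $r_\alpha\geq 2$ and the non-scalar condition on $E_\beta^{(\alpha)}$ are already implied by the distinguishing hypothesis.
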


\begin{proof}
The hypotheses supply exactly the inputs to Proposition~\ref{prop:mixing_implies_nonmarkov}: two histories ending in the same sector with different post-interaction states, and a non-scalar transition effect operator that distinguishes them.
\end{proof}

\begin{remark}[Memory hierarchy]
\label{rem:nonmarkov_hierarchy}
In a one-dimensional sector (\(r_\alpha = 1\)), the transition effect operator \(E_\beta^{(\alpha)}\) is automatically scalar, so the Aperture record restricted to exits from that sector is automatically Markovian.  This statement is specific to the sequential record process with a L\"uders interaction at each step; non-Markovianity of the unmeasured transition law between configurations (Barandes~\cite{Barandes2025}) is a separate phenomenon and is not in tension with the present claim.  In sectors with \(r_\alpha \geq 2\), the operator \(E_\beta^{(\alpha)}\) may have up to \(r_\alpha\) distinct eigenvalues, permitting hidden intra-sector distinctions that affect future transition probabilities.
\end{remark}

\begin{remark}[Connection to stochastic reformulations of quantum theory]
\label{rem:stochastic_interpretation}
Theorem~\ref{thm:at_nonmarkovianity} provides a concrete structural realization of the intrinsic non-Markovian character of quantum mechanics identified by Barandes \cite{Barandes2025} in the stochastic-quantum correspondence.  In that framework, quantum systems are equivalent to fundamentally non-Markovian stochastic processes.  In AT, this non-Markovianity is derived from a specific algebraic mechanism: the coarse-graining of deterministic unitary dynamics through the abelian Aperture interface \(\mathfrak{A}_{F,b}\), whose limited tomographic capacity cannot track the internal degrees of freedom that influence future boundary statistics.  The transition effect operator \(E_\beta^{(\alpha)}\) gives a sharp algebraic criterion connecting the two perspectives.
\end{remark}

\bibliographystyle{plain}
\bibliography{references}
\end{document}